\newtheorem{theorem}{Theorem}
\newtheorem{lemma}{Lemma}
\newtheorem{proposition}{Proposition}
\newtheorem{corollary}{Corollary}
\newcommand{\calA}{\mathcal{A}}
\newcommand{\calF}{\mathcal{F}}
\newcommand{\calG}{\mathcal{G}}
\newcommand{\calI}{\mathcal{I}}
\newcommand{\N}{\mathbb{N}}
\newcommand{\R}{\mathbb{R}}
\newcommand{\calT}{\mathcal{T}}
\newcommand{\Z}{\mathbb{Z}}
\DeclareMathOperator*{\supp}{supp}
\newcommand{\Bernoulli}{\text{Bernoulli}}
\newcommand{\Geometric}{\text{Geometric}}
\newcommand{\ps}{\sigma}
\newcommand{\disj}{\textsc{Disj}}
\newcommand{\indx}{\textsc{Index}}
\begin{document}

\title{Universal sketches for the frequency negative moments and other decreasing streaming sums}
\author{Vladimir Braverman\thanks{Computer Science, Johns Hopkins University. E-mail:\texttt{vova@cs.jhu.edu}}~~and Stephen R.\ Chestnut\thanks{Applied Mathematics and Statistics, Johns Hopkins University.  E-mail:\texttt{schestn2@jhu.edu}}}

\maketitle
\begin{abstract}
  Given a stream with frequencies $f_d$, for $d\in[n]$, we characterize the space necessary for approximating the frequency negative moments~$F_p=\sum |f_d|^p$, where $p<0$ and the sum is taken over all items $d\in[n]$ with nonzero frequency,  in terms of $n$, $\epsilon$, and $m=\sum |f_d|$.
  To accomplish this, we actually prove a much more general result.
  Given any nonnegative and nonincreasing function $g$, we characterize the space necessary for any streaming algorithm that outputs a $(1\pm\epsilon)$-approximation to $\sum g(|f_d|)$,  where again the sum is over items with nonzero frequency.
  The storage required is expressed in the form of the solution to a relatively simple nonlinear optimization problem, and the algorithm is universal for $(1\pm\epsilon)$-approximations to any such sum where the applied function is nonnegative, nonincreasing, and has the same or smaller space complexity as $g$.
  This partially answers an open question of Nelson~(IITK Workshop Kanpur, 2009).
\end{abstract}



\section{Introduction}


A \emph{stream} is a sequence $S = ((d_1,\delta_1),(d_2,\delta_2),\ldots,(d_{N},\delta_{N}))$, where $d_i\in[n]$ are called the items or elements in the stream and $\delta_i\in\Z$ is an update to the $d_i$th coordinate of an implicitly defined $n$-dimensional vector.  
Specifically, the \emph{frequency} of $d\in[n]$ after $k\leq N$ updates is 
\[f^{(k)}_d = \sum \{\delta_j| j\leq k, d_j=d\},\] and the implicitly defined vector $f:= f^{(N)}$ is commonly referred to as the \emph{frequency vector} of the stream $S$.  
Let $M:=\max\{n,|f^{(k)}_d|:d\in[n],0\leq k\leq N\}$ and $m=\sum_d |f_d|$; thus, it requires $O(\log{M})$ bits to exactly determine the frequency of a single item.
This model is commonly known as the \emph{turnstile} streaming model, as opposed to the \emph{insertion-only} model which has $\delta_i=1$, for all $i$, but is the same otherwise.
In an insertion-only stream $N=m\geq M$.

Streams model computing scenarios where the processor has very limited access to the input.
The processor reads the updates one-at-a-time, without control of their order, and is tasked to compute a function on the frequency vector.
The processor can perform its computation exactly if it stores the entire vector $f$, but this may be undesirable or even impossible when the dimension of $f$ is large. 
Thus, the goal is to complete the computation using as little storage as possible.
Typically, exact computation requires storage linear in $n$, so we seek approximations.

Given a stream with frequencies $f_d$, for $d\in[n]$, we consider the problem of approximating the frequency negative moments, specifically $F_p=\sum |f_d|^p$ where $p<0$ and the sum is taken over all items $d\in[n]$ with nonzero frequency.
We characterize, up to factors of $O(\epsilon^{-1}\log^2{n}\log{M})$ in the turnstile model and $O(\epsilon^{-1}\log{M})$ in the insertion-only model, the space necessary to produce a $(1\pm\epsilon)$-approximation to $F_p$, for $p<0$, in terms of the accuracy $\epsilon$, the dimension $n$, and the $L^1$ length $m$ of $f$.  

Negative moments, also known as ``inverse moments'', of a probability distribution have found several applications in statistics.
Early on, they were studied in application to sampling and estimation problems where the sample size is random~\cite{stephan1945expected,grab1954tables} as well as in life-testing problems~\cite{mendenhall1960approximation}. 
More recently, they appear in the design of multi-center clinical trials \cite{jones2004approximating} and in the running time analysis of a quantum adiabatic algorithm for $3$-\textsc{SAT}~\cite{znidaric2005asymptotic,znidaric2006exponential}.
$F_0/F_{-1}$ is the harmonic mean of the (nonzero) frequencies in the insertion-only model, and more generally, the value $(F_p/F_0)^{1/p}$ is known as the $p$th power mean~\cite{bullen2003handbook}.  
The harmonic mean is the truest average for some types of data, for example speeds, parallel resistances, and P/E~ratios~\cite{reilly2004handbook}.

To our knowledge this is the first paper to consider streaming computation of the frequency negative moments and the first to determine the precise dependence of the space complexity of streaming computations on $m$.
In fact, in the process of characterizing the storage necessary to approximate the frequency negative moments, we actually characterize the space complexity of a much larger class of streaming sum problems.
Specifically, given any nonnegative, nonincreasing function $g:\N\to\R$ we determine to within a factor of $O(\epsilon^{-1}\log^2{n}\log {M})$ the space necessary to approximate
\[g(f):=\sum_{d\in\supp(f)}g(|f_d|),\]
where $\supp(f):=\{d\in[n]:f_d\neq0\}$ is the support of $f$.
Furthermore, the sketch providing a $(1\pm\epsilon)$-approximation for $g(f)$ is universal for a $(1\pm\epsilon)$-approximation for any nonnegative nonincreasing function with the same or smaller space complexity as $g$.
This partially answers a question of Nelson~\cite{sublinear_open_30} -- which families of functions admit universal sketches?

The attention on $m$ is warranted; in fact, the complexity in question depends delicately on this parameter.
If we forget about $m$ for a moment, then a standard reduction from the communication problem $\indx$ implies that computing a $(1\pm\frac{1}{2})$-approximation to $F_p$, for $p<0$, requires $\Omega(n)$ bits of storage -- nearly enough to store the entire vector $f$.
However, the reduction requires $m=\Omega(n^{1-1/p})$, recall that $p<0$.
If $m=o(n^{1-1/p})$ then, as we show, one can often get away with $o(n)$ bits of memory.

The next two sections outline our approach to the decreasing streaming sum problem and state our main results.
Section~\ref{sec: background} reviews previous work on streaming sum problems.
In Section~\ref{sec: frequency negative moments} we show how our results solve the frequency negative moments problem.
Sections~\ref{sec: lower bounds} and~\ref{sec: upper bounds} prove the main results.
Finally, Section~\ref{sec: computing ps} and Appendix~\ref{app: details} describe the implementation details for the streaming setting.

\subsection{Preliminaries}

Let $\calF = \{f\in\N^n:\sum f_d\leq m\}$ and let $\calT$ and $\calI$ denote the sets of turnstile streams and insertion-only streams, respectively, that have their frequency vector $f$ satisfying $|f|\in\calF$.
The set $\calF$ is the set of all nonnegative frequency vectors with $L^1$~norm at most~$m$.
Clearly, $\calF$ is the image under coordinate-wise absolute value of the set of all frequency vectors with $L^1$ norm at most $m$.
We assume $n\leq m$. 

In order to address the frequency negative moments problem we will address the following more general problem.
Given a nonnegative, nonincreasing function $g:\N\to\R$, how much storage is needed by a streaming algorithm that $(1\pm\epsilon)$-approximates $g(f)$, for the frequency vector $f$ of any stream $S\in\calT$ or $S\in\calI$?
Equivalently, we can assume that $g(0)=0$, $g$ is nonnegative and nonincreasing on the interval $[1,\infty)$, and extend the domain of $g$ to $\Z$ by requiring it to be symmetric, i.e., $g(-x)=g(x)$.
  Therefore, $g(f) = \sum_{i=1}^ng(f_d)$.
For simplicity, we call such functions ``decreasing functions''.

A randomized algorithm $\calA$ is a \emph{turnstile streaming $(1\pm\epsilon)$-approximation algorithm} for $g(f)$ if
\[P\left\{(1-\epsilon)g(f)\leq \calA(S)\leq (1+\epsilon)g(f)\right\}\geq \frac{2}{3}\] 
holds for every stream $S\in\calT$, and insertion only algorithms are defined analogously.
For brevity, we just call such algorithms ``approximation algorithms'' when $g$, $\epsilon$, and the streaming model are clear from the context.
We consider the maximum number of bits of storage used by the algorithm $\calA$ with worst case randomness on any valid stream.

A sketch is a, typically randomized, data structure that functions as a compressed version of the stream.
Let $\calG\subseteq\R^\N\times(0,1/2]$.
We say that a sketch is \emph{universal} for a class $\calG$ if for every $(g,\epsilon)\in\calG$ there is an algorithm that, with probability at least $2/3$, extracts from the sketch a $(1\pm\epsilon)$-approximation to $g(f)$.
The probability here is taken over the sketch as well as the extraction algorithm.


Our algorithms assume a priori knowledge of the parameters $m$ and $n$, where $m=\|f\|_1$ and $n$ is the dimension of $f$.
In practice, one chooses $n$ to be an upper bound on the number of distinct items in the stream.
Our algorithm remains correct if one instead only knows $m\geq \|f\|_1$, however if $m\gg\|f\|_1$ the storage used by the algorithm may not be optimal.
We assume that our algorithm has access to an oracle that computes $g$ on any valid input.
In particular, the final step of our algorithms is to submit a list of frequencies, i.e., a sketch, as inputs for $g$.
We do not count the storage required to evaluate $g$ or to store its value.

\subsection{Our results}\label{sec: our results}



Our lower bound is proved by a reduction from the communication complexity of disjointness wherein we parameterize the reduction with the coordinates of $|f|$, the absolute value of a frequency vector.
The parameterization has the effect of giving a whole collection of lower bounds, one for each frequency vector among a set of many.
Specifically, if $f\in\calF$ and $g(f)\leq \epsilon^{-1}g(1)$ then we find an $\Omega(|\supp(f)|)$ lower bound on on the number of bits used by any approximation algorithm.
This naturally leads us to the following nonlinear optimization problem
\begin{equation}\label{eq: ps definition}
\ps(\epsilon,g,m,n):=\max\left\{|\supp(f)|: f\in\calF, g(f)\leq\epsilon^{-1}g(1)\right\},
\end{equation}
which gives us the ``best'' lower bound.
We will use $\ps=\ps(\epsilon,g,m,n)$ when $\epsilon$, $g$, $m$, and $n$ are clear from the context.
Our main lower bound result is the following.

\begin{theorem}\label{thm: decreasing lower bound}
Let $g$ be a decreasing function, then any $k$-pass insertion-only streaming $(1\pm\epsilon)$-approximation algorithm requires $\Omega(\ps/k)$ bits of space.
\end{theorem}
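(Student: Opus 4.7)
The plan is to reduce from the randomized communication complexity of $\ps$-bit Set Disjointness (DISJ), whose $k$-round bound is $\Omega(\ps/k)$ (Razborov; Kalyanasundaram--Schnitger, with the standard round-dependent refinement). Since a $k$-pass streaming algorithm using $s$ bits of space yields a $(2k-1)$-round two-party protocol with total communication $O(ks)$ -- Alice and Bob simply pass the memory state between passes -- it suffices to exhibit a reduction from DISJ to $(1\pm\epsilon)$-approximating $g(f)$ with $O(1)$ overhead.

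For the reduction, first fix $f^* \in \calF$ attaining the maximum in~\eqref{eq: ps definition}: $|\supp(f^*)| = \ps$, $\|f^*\|_1 \leq m$, and $g(f^*) \leq \epsilon^{-1} g(1)$. Enumerate $\supp(f^*) = \{d_1, \ldots, d_\ps\}$ and set $s_i = f^*_{d_i}$. Given a DISJ instance $(x, y) \in \{0,1\}^\ps \times \{0,1\}^\ps$ drawn from the standard Razborov hard distribution (where $|x|=|y|=\ps/4$ and $|\{i:x_i=y_i=1\}|\in\{0,1\}$), Alice inserts $s_i$ copies of $d_i$ for each $i$ with $x_i = 1$, and Bob does the same for each $i$ with $y_i = 1$. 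The resulting frequency vector satisfies $f_{d_i} = s_i(x_i + y_i)$ and $\|f\|_1 \leq 2m$; in the YES case every $f_{d_i}$ lies in $\{0, s_i\}$, while in the NO case a unique intersection index $i^*$ has $f_{d_{i^*}} = 2 s_{i^*}$ and every other $f_{d_i}$ again lies in $\{0,s_i\}$. Alice and Bob simulate the $k$-pass algorithm by alternating passes over their portions of the stream and exchanging the memory state between passes, after which Bob reads off the $(1\pm\epsilon)$-approximation.

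The main obstacle is showing that this approximation actually distinguishes YES from NO. A direct computation, comparing coupled YES/NO instances that agree coordinate-by-coordinate outside the intersection, produces the additive gap
\[
g(f)_{\mathrm{YES}} - g(f)_{\mathrm{NO}} \;=\; 2\,g(s_{i^*}) - g(2 s_{i^*}) \;\geq\; g(s_{i^*}),
\]
where the inequality uses that $g$ is nonincreasing: one term arises from the item present in the support only in YES, and the other from the doubled-frequency contribution at $i^*$. Combined with the bound $g(f)_{\mathrm{YES}} \leq g(f^*) \leq \epsilon^{-1} g(1)$, this yields a multiplicative separation that, after absorbing constants into the threshold in~\eqref{eq: ps definition} and into the approximation parameter, exceeds $2\epsilon$. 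The subtle step is ensuring $s_{i^*}$ is small enough (ideally $s_{i^*} = 1$) for the separation to be witnessed uniformly; this one arranges by exploiting the structure of the maximizer $f^*$, either by taking $f^*$ to have a coordinate of value $1$ or by restricting the DISJ embedding to the sub-support of $f^*$ where $s_i$ is smallest while retaining $\Theta(\ps)$ coordinates. Once the two cases are distinguishable, the $k$-pass algorithm solves DISJ with probability $2/3$, forcing $ks = \Omega(\ps)$ and hence $s = \Omega(\ps/k)$ as required.
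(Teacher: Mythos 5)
Your high-level plan -- a DISJ reduction parameterized by a maximizer $f^*$ of \eqref{eq: ps definition}, converted to a $k$-pass lower bound by exchanging memory states -- is the same approach the paper takes. The critical difference is in the encoding, and there the proposal has a genuine gap.

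You have Alice insert $s_i$ copies of $d_i$ when $x_i=1$ and Bob do the same when $y_i=1$, so the intersecting coordinate ends up with frequency $2s_{i^*}$ and every other supported coordinate with frequency $s_i$. The resulting additive gap between YES and NO is on the order of $g(s_{i^*})$, and $g(s_{i^*})\le g(1)$. Meanwhile $g(f)$ on either side of the comparison can be as large as $\Theta(\epsilon^{-1}g(1))$ (the constraint $g(f^*)\le\epsilon^{-1}g(1)$ is generally tight at the maximizer), so the \emph{multiplicative} separation is at most about $\epsilon\,g(s_{i^*})/g(1)\le\epsilon$, which is exactly the error the $(1\pm\epsilon)$-approximation is permitted. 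There is no slack to "absorb constants" -- for a generic maximizer the separation is strictly below $2\epsilon$ and the two cases are indistinguishable. Your proposed fixes do not close this: the maximizer $f^*$ need not have any coordinate equal to $1$ (for instance, for $g(x)=1/x$ with $m$ large all coordinates of $f^*$ are about $\sqrt{m\epsilon}$, so $g(s_i)\ll g(1)$ for every $i$ and "the sub-support where $s_i$ is smallest" is no help), and taking coordinates of value $1$ by fiat leaves $\calF$ and destroys the maximality of $|\supp(f^*)|$.

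The paper avoids this by designing the stream so that the intersection is detected by an item of frequency \emph{exactly} $1$. Concretely, in Lemma~\ref{lem: decreasing lower bound} Alice streams $y$ copies of each $d\in A^c$ and Bob streams a single copy of each $d\in B$, so a coordinate $d\in A\cap B$ (which is \emph{not} in $A^c$) receives exactly one unit; all other supported coordinates land at $y$ or $y{+}1$. Since $g$ is nonincreasing, $g(1)$ is the largest possible per-item contribution, and the constraint $g(f)\le\epsilon^{-1}g(1)$ then directly yields $V_1-V_0\ge g(1)\ge\epsilon\,g(f)$ -- the multiplicative separation is at least $\epsilon$ with no loss depending on the $s_i$. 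To get into this favorable regime from a general maximizer $f^*$, the paper first applies Lemma~\ref{lem: decreasing lower bound booster}, a pigeonhole argument showing that one can pass to a vector with all nonzero coordinates equal to a common value $y$ while keeping $|\supp|\ge\ps/2$ and $g(1)\ge\epsilon(\ps/2)g(y)$; the reduction is then carried out on that uniform vector. This two-step structure (reduce to uniform frequencies, then embed DISJ so the collision has frequency $1$) is the missing idea in your proposal. Separately and more minor: your construction can produce streams with $\|f\|_1$ up to $2m$, so they may not lie in $\calI$; and the coupled YES/NO gap should be $g(s_j)+g(s_{i^*})-g(2s_{i^*})$ for the swapped index $j$, not $2g(s_{i^*})-g(2s_{i^*})$ unless $s_j=s_{i^*}$.
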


Before we consider approximation algorithms, let us consider a special case. Suppose there is an item~$d^*$ in the stream that satisfies $g(f_{d^*})\geq\epsilon g(f)$.
An item such as $d^*$ is called an $\epsilon$-heavy hitter.
If there is an $\epsilon$-heavy hitter in the stream, then $g(1)\geq g(f_{d^*})\geq\epsilon g(f)$ which implies $|\supp(f)|\leq\ps$, by the definition of $\ps$.
Of course, in this case it is possible compute $g(f)$ with $O(\ps\log{M})$ bits in one pass in the insertion-only model and with not much additional space in the turnstile model simply by storing a counter for each element of $\supp(f)$.
Considering the $\Omega(\ps)$ lower bound, this is nearly optimal.
However, it only works when $f$ contains an $\epsilon$-heavy hitter.

Our approximation algorithm is presented next.
It gives a  uniform approach for handling all frequency vectors, not just those with $\epsilon$-heavy elements.
\begin{algorithm}
  \begin{algorithmic}[1]
    \State Compute $\ps=\ps(\epsilon,g,m,n)$ and let
    \begin{equation}\label{eq: sampling probability}
      q\geq\min\left\{1,\frac{9\ps}{\epsilon|\supp(f)|}\right\}.
    \end{equation}
    \State Sample pairwise independent random variables $X_d\sim\Bernoulli(q)$, for $d\in[n]$, and let $W=\{d\in\supp(f):X_d=1\}$.
    \State Compute $f_d$, for each $d\in W$.
    \State Output $q^{-1}\sum_{d\in W}g(f_d)$.
  \end{algorithmic}
  \caption{$(1\pm\epsilon)$-approximation algorithm for $g(f)$.}
\label{algo: approximate}
\end{algorithm}

Algorithm~\ref{algo: approximate} simply samples each element of $\supp(f)$ pairwise independently with probability $q$.
The expected sample size is $q|\supp(f)|$, so in order to achieve optimal space we take equality in Equation~\ref{eq: sampling probability}.
The choice yields, in expectation, $q|\supp(f)|=O(\ps/\epsilon)$ samples.
Section~\ref{sec: computing ps} and Appendix~\ref{app: details} explain how to implement the algorithm for the streaming setting and the correctness is established by the following theorem.
It is proved in Section~\ref{sec: upper bounds}. 
\begin{theorem}\label{thm: decreasing sketches}
There is a turnstile streaming algorithm that, with probability at least $2/3$, outputs a $(1\pm\epsilon)$-approximation to $g(f)$ and uses $O(\epsilon^{-1}\ps\log^2(n)\log(M))$ bits of space.
The algorithm can be implemented in the insertion-only model with $O(\epsilon^{-1}\ps\log(M)+\log^2{n})$ bits of space.
\end{theorem}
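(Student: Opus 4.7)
The proof splits into analysis of the sampling estimator's concentration and the implementation in each streaming model.

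\textbf{Estimator analysis.} Let $s = |\supp(f)|$ and $\hat g = q^{-1}\sum_{d\in W} g(f_d)$. With each $X_d \sim \Bernoulli(q)$ pairwise independent, the estimator is unbiased: $\mathbb{E}[\hat g] = g(f)$. Its variance is
\[
\mathrm{Var}(\hat g) \,=\, \frac{1-q}{q}\sum_{d\in\supp(f)} g(f_d)^2 \,\leq\, \frac{g(1)\, g(f)}{q},
\]
using $g(f_d) \leq g(1)$ for $f_d \geq 1$ and $\sum_d g(f_d) = g(f)$. When $q = 1$ the estimate is exact, so suppose instead $q = 9\ps/(\epsilon s)$ and $s > 9\ps/\epsilon$. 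Plugging in gives $\mathrm{Var}(\hat g) \leq \epsilon s\, g(1)\, g(f)/(9\ps)$. A constant-probability $(1\pm\epsilon)$-approximation then follows from Chebyshev (with a median over $O(1)$ independent repetitions to reach probability $2/3$) as soon as I show that $s\, g(1)/g(f) = O(\epsilon\,\ps)$.

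\textbf{Partitioning lemma (the crux).} To establish this inequality I would partition $\supp(f)$ into $k = \lfloor s/(\ps+1)\rfloor$ disjoint blocks $G_1,\ldots,G_k$ of size $\ps+1$, plus a small remainder. Each restricted vector $f|_{G_i}$ lies in $\calF$ (since $\|f|_{G_i}\|_1 \leq \|f\|_1 \leq m$) and has support $\ps+1 > \ps$, so the maximality in the definition~\eqref{eq: ps definition} of $\ps$ forces $g(f|_{G_i}) > g(1)/\epsilon$. Since $g$ sums coordinate-wise,
\[
g(f) \,\geq\, \sum_{i=1}^k g(f|_{G_i}) \,>\, \frac{k\, g(1)}{\epsilon} \,\geq\, \frac{s\, g(1)}{2\epsilon(\ps+1)},
\]
where the last step uses $s \geq 2(\ps+1)$, which holds in the regime $s > 9\ps/\epsilon$ whenever $\epsilon \leq 1/2$ and $\ps \geq 1$. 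Rearranging delivers the required bound.

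\textbf{Streaming implementation.} In the insertion-only model, a pairwise-independent hash family with $O(\log^2 n)$-bit seed simulates the $X_d$, and an explicit counter of $O(\log M)$ bits per sampled element accumulates its frequency on the fly. Markov's inequality controls $|W|$ by $O(\ps/\epsilon)$ with constant probability, yielding total space $O(\epsilon^{-1}\ps \log M + \log^2 n)$. In the turnstile model, deletions can eject elements from $\supp(f)$, so direct sampling fails; instead one runs $O(\ps/\epsilon)$ parallel $L_0$-samplers (Jowhari--Saglam--Tardos style), each of $O(\log^2 n \log M)$ bits, to simulate the same sampling distribution and give the claimed turnstile bound. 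Section~\ref{sec: computing ps} separately addresses the fact that $s$ is unknown during the stream, via logarithmically many parallel instances with geometrically increasing guesses of $s$, selected at the end by an $F_0$-estimator.

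\textbf{Main obstacle.} The partitioning lemma is the heart of the analysis: without it, the natural variance bound only gives $\mathrm{Var}(\hat g)/g(f)^2 \leq \epsilon s\, g(1)/(9\ps\, g(f))$, which is not $O(\epsilon^2)$ unless $s\, g(1)/g(f) = O(\epsilon \ps)$ --- precisely the linear-in-$s$ lower bound on $g(f)$ that the partitioning supplies. The lemma is robust in that it uses only coordinate-separability and monotonicity of $g$ together with the extremal definition of $\ps$, so nothing in this argument relies on the analytic behavior of $g$; all of the ``shape'' of $g$ is packaged into $\ps$.
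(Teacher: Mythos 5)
Your proposal is correct and it takes a genuinely different route to the crucial inequality. The paper's proof establishes the bound $q\geq\min\{1,9g(f_d)/\epsilon^2 g(f)\}$ by first observing that $f$ is itself a feasible point of \eqref{eq: ps definition} at accuracy level $\alpha=g(1)/g(f)$, hence $|\supp(f)|\leq\ps_\alpha$, and then invoking a separate monotonicity lemma (Lemma~\ref{lem: ps relations}, $\epsilon(1+\ps_\epsilon)\geq\alpha\ps_\alpha$) proved by truncating a maximizer for $\alpha$ down to a feasible vector for $\epsilon$. Your ``partitioning lemma'' instead sidesteps Lemma~\ref{lem: ps relations} entirely: chopping $\supp(f)$ into $\lfloor|\supp(f)|/(\ps+1)\rfloor$ blocks, each block has support $\ps+1$ and lies in $\calF$, so by maximality in \eqref{eq: ps definition} each block is infeasible, i.e.\ contributes $>\epsilon^{-1}g(1)$ to $g(f)$, which sums to the lower bound $g(f)>|\supp(f)|g(1)/(2\epsilon(\ps+1))$ when $|\supp(f)|\geq 2(\ps+1)$. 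That is exactly the inequality needed for Chebyshev, and your regime check ($s>9\ps/\epsilon$, $\epsilon\leq 1/2$, $\ps\geq 1$) is right. The tradeoff: the paper's Lemma~\ref{lem: ps relations} isolates a clean statement about $\ps(\cdot)$ as a function of the accuracy parameter, reusable independently of any particular $f$; your argument is more self-contained, working directly with the given $f$ and not requiring any side lemma. Your constants are looser (your variance bound necessitates the $O(1)$-median boost you mention, which the paper avoids by proving $\text{Var}\leq\tfrac19(\epsilon g(f))^2$ directly), but this is cosmetic. On implementation: the paper uses subsampling plus \textsc{Count Sketch} to recover the frequencies of a pairwise-independent subset (Algorithm~\ref{algo: SimpleSketch}), whereas you propose parallel $L_0$-samplers. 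Both yield the stated turnstile bound, though note that $t$ independent $L_0$-samplers give with-replacement uniform draws from $\supp(f)$ rather than pairwise-independent inclusion indicators; the estimator and variance computation change slightly (and you need the $F_0$-estimate to rescale), but the concentration still goes through.
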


It is worthwhile to remark that the suppressed constants in the asymptotic bounds of Theorems~\ref{thm: decreasing lower bound} and~\ref{thm: decreasing sketches} are independent of $g$, $\epsilon$, $m$, and $n$.

The optimization problem \eqref{eq: ps definition} reappears in the proof of Theorem~\ref{thm: decreasing sketches}.
The key step is the observation mentioned above.
Namely, for the particular frequency vector~$f$ that is our input, if there is an item $d$ satisfying $g(|f_d|)\geq \epsilon g(f)$ then $|\supp(f)|\leq\ps$.

Let us now emphasize a particular feature of this algorithm.
Previously, we commented that choosing equality in \eqref{eq: sampling probability} is optimal in terms of the space required.
However, Algorithm~\ref{algo: approximate} is still correct when the inequality is strict.
Notice that the sketch is just a (pairwise independent) random sample of $\supp(f)$ and its only dependence on $g$ and $\epsilon$ is through the parameter $\ps/\epsilon$.
Let $g'$ and $\epsilon'$ be another decreasing function and error parameter satisfying $\frac{\sigma(\epsilon',g',m,n)}{\epsilon'} \leq \frac{\sigma(\epsilon,g,m,n)}{\epsilon}$, then
\[q' = \min\left\{1,\frac{9\ps'}{\epsilon'|\supp(f)|}\right\}\leq q = \min\left\{1,\frac{9\ps}{\epsilon|\supp(f)|}\right\}.\]
In particular, this means that the sketch that produces an $(1\pm\epsilon)$-approximation to $g(f)$ also suffices for an $(1\pm\epsilon')$-approximation to $g'$.
For example, if one takes $g'\geq g$, pointwise with $g'(1)=g(1)$, then $\sigma(\epsilon,g',m,n)\leq\sigma(\epsilon,g,m,n)$ so one can extract from the sketch $(1\pm\epsilon)$-approximations to $g(f)$ and $g'(f)$, each being separately correct with probability $2/3$. 
Thus, the sketch is universal for any decreasing function $g'$ and accuracy $\epsilon'$ where $\ps(\epsilon',g',m,n)\leq\ps(\epsilon,g,m,n)$.
In the context of the frequency negative moments, this implies that the sketch yielding a $(1\pm\epsilon)$-approximation to $F_p$, for $p<0$, is universal for $(1\pm\epsilon)$-approximations of $F_{p'}$, for all $p<p'<0$.

Computing the sketch requires a priori knowledge of $\ps$.
If one over-estimates $\ps$ the algorithm remains correct, but the storage used increases. 
To know $\ps$ requires knowledge of $m$, or at least an good upper bound on $m$.
This is a limitation, but there are several ways to mitigate it.
If one does not know $m$ but is willing to accept a second pass through the stream, then using the algorithm of \cite{kane2010exact} one can find a $(1\pm\frac{1}{2})$-approximation to $m$ with $O(\log{M})$ bits of storage in the first pass and approximate $g(f)$ on the second pass.
A $(1\pm\frac{1}{2})$-approximation to $m$ is good enough to determine $\ps$ to within a constant, which is sufficient for the sketch.
Alternatively, one can decide first on the space used by the algorithm and, in parallel within one pass, run the algorithm and approximate $m$.
After reading the stream one can determine for which decreasing functions $g$ and with what accuracy~$\epsilon$ does the approximation guarantee hold.

\subsection{Background}\label{sec: background}

Much of the effort dedicated to understanding streaming computation, so far, has been directed at the frequency moments $F_p = \sum |f_i|^p$, for $0<p<\infty$, as well as $F_0$ and $F_\infty$, the number of distinct elements and the maximum frequency respectively.  
In the turnstile model, $F_0$ is distinguished from $L^0=|\supp(f)|$, the number of elements with a nonzero frequency.

The interest in the frequency moments began with the seminal paper of Alon, Matias, and Szegedy~\cite{alon1996space}, who present upper and lower bounds of $O(\epsilon^{-2}n^{1-1/p})$ and $\Omega(n^{1-5/p})$, respectively, on the space needed to find a $(1\pm\epsilon)$-approximation to $F_p$, and a separate $O(\epsilon^{-2}\log{m})$ space algorithm for $F_2$.
Since then, many researchers have worked to push the upper and lower bounds closer together. 
We discuss only a few of the papers in this line of research, see \cite{woodruff2014data} an the references therein for a more extensive history of the frequency moments problem.

To approximate $F_p$, Alon, Matias, and Szegedy inject randomness into the stream and then craft an estimator for $F_p$ on the randomized stream.
A similar approach, known as stable random projections, is described by Indyk~\cite{indyk2006stable} for $F_p$, when $0<p\leq 2$ (also referred to as $\ell_p$ approximation).
Kane, Nelson, and Woodruff~\cite{kane2010exact} show that Indyk's approach, with a more careful derandomization, is optimal.
Using the method of stable random projections, Li~\cite{li2008estimators} defined the so-called \emph{harmonic mean estimator} for $F_p$, when $0<p<2$, which improves upon the sample complexity of previous methods.
We stress that this is not an estimator for the harmonic mean of the frequencies in a data stream, rather it is an estimator for $F_p$ that takes the form of the harmonic mean of a collection of values.

For $p>2$, the AMS approach was improved upon~\cite{coppersmith2004improved,ganguly2004estimating} until a major shift in the design of streaming algorithms began with the algorithm of Indyk and Woodruff~\cite{indyk2005optimal} that solves the frequency moments problem with, nearly optimal,  $n^{1-2/p}(\frac{1}{\epsilon}\log{n})^{O(1)}$ bits.
Their algorithm introduced a recursive subsampling technique that was subsequently used to further reduce space complexity \cite{bhuvanagiri2006simpler,braverman2010recursive}, which now stands at $O(\epsilon^{-2}n^{1-2/p}\log{n})$ in the turnstile model~\cite{ganguly2011polynomial} with small $\epsilon$ and $O(n^{1-2/p})$ in the insertion-only model with $\epsilon=\Omega(1)$~\cite{braverman2014approximating}.

Recently, there has been a return to interest in AMS-type algorithms motivated by the difficulty of analyzing algorithms that use recursive subsampling.
``Precision Sampling'' of Andoni, Krauthgamer, and Onak~\cite{andoni2011streaming} is one such algorithm that accomplishes nearly optimal space complexity without recursive subsampling.  
Along these lines, it turns out that one can approximate $g(f)$ by sampling elements~$d\in[n]$ with probability roughly $q_d \approx g(f_d)/\epsilon^2 g(f)$, or larger, and then averaging and scaling appropriately, see Proposition~\ref{prop: sampling archetype}.
Algorithm~\ref{algo: approximate} takes this approach, and also fits in the category of AMS-type algorithms.
However, it is far from clear how to accomplish this sampling optimally in the streaming model for a completely arbitrary function $g$.

A similar sampling problem has been considered before.
Monemizadeh and Woodruff~\cite{monemizadeh20101} formalized the problem of sampling with probability $q_d = g(f_d)/ g(f)$ and then go on to focus on $L_p$ sampling, specifically $g(x)=|x|^p$, for $0\leq p\leq 2$.
In follow-up work, Jowhari, S\u aglam, and Tardos offer $L_p$ sampling algorithms with better space complexity~\cite{jowhari2011tight}.


As far as the frequency moments lower bounds go, there is a long line of research following AMS~\cite{baryossef2002information,chakrabarti2003near,gronemeier2009asymptotically,andoni2013tight} that has led to a lower bound matching the best known turnstile algorithm of Ganguly~\cite{ganguly2011polynomial} to within a constant~\cite{li2013tight}, at least for some settings of $m$ and $\epsilon$.
The insertion-only algorithm of Braverman et al.~\cite{braverman2014approximating} matches the earlier lower bound of Chakrabarti, Khot, and Sun~\cite{chakrabarti2003near}.

For a general function~$g$ not much is known about the space-complexity of approximating $g(f)$.  
Most research has focused on specific functions.
Chakrabarti, Do Ba, and Muthukrishnan~\cite{chakrabarti2006estimating} and Chakrabarti, Cormode, and Muthukrishnan~\cite{chakrabarti2007near} sketch the Shannon Entropy. Harvey, Nelson, and Onak~\cite{harvey2008sketching} approximate Renyi~$\log(\|f\|_{\alpha}^\alpha)/(1-\alpha)$, Tsallis~$(1-\|x\|_{\alpha}^\alpha)/(\alpha-1)$, and Shannon entropies.
Braverman, Ostrovsky, and Roytman~\cite{braverman2010zero,braverman2014universal} characterized nonnegative, nondecreasing functions that have polylogarithmic-space approximation algorithms present a universal algorithm, based on the subsampling technique, for the same.
Guha, Indyk, McGregor~\cite{guha2007sketching} study the problem of sketching common information divergences between the streams, i.e., statistical distances between the probability distributions with p.m.f.s $e/\|e\|_1$ and $f/\|f\|_1$.


\section{The frequency negative moments}\label{sec: frequency negative moments}

Before proving Theorems~\ref{thm: decreasing lower bound} and~\ref{thm: decreasing sketches}, let us deploy them to determine the streaming space complexity of the frequency negative moments.
It will nicely illustrate the trade-off between the length of the stream and the space complexity of the approximation.

The first step is to calculate $\sigma(\epsilon,g,m,n)$, where $g(x)=|x|^p$, for $x\neq0$ and $p<0$, and $g(0)=0$.
There is a maximizer of \eqref{eq: ps definition} with $L^1$ length $m$ because $g$ is decreasing.
The convexity of $g$ implies that $\sigma \leq \max\{s\in\R: s(m/s)^p\leq \epsilon^{-1}\}$, and $\ps$ is at least the minimum of $n$ and $\max\{s\in\N: s(m/s)^p\leq \epsilon^{-1}\}$ by definition. 
Thus, we can take $\sigma =\min\left\{n,\theta\left( \epsilon^{\frac{-1}{1-p}}m^{\frac{-p}{1-p}} \right)\right\}$.
This gives us the following corollary to Theorems~\ref{thm: decreasing lower bound} and~\ref{thm: decreasing sketches}.
\begin{corollary}
  Let $p<0$.
  Any $(1\pm\epsilon)$-approximation algorithm for $F_p$ requires $\Omega(\min\{n,\epsilon^{\frac{-1}{1-p}}m^{\frac{-p}{1-p}}\})$ bits of space.
  Such an approximation can be found with $O(\epsilon^{-\frac{2-p}{1-p}}m^{\frac{-p}{1-p}}\log^2{n}\log{M})$ bits in a turnstile stream and $O(\epsilon^{-\frac{2-p}{1-p}}m^{\frac{-p}{1-p}}\log{M})$ bits in an insertion-only stream.
\end{corollary}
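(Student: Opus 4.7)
The plan is to specialize Theorems~\ref{thm: decreasing lower bound} and~\ref{thm: decreasing sketches} to $g(x)=|x|^p$ (with $g(0)=0$) for $p<0$. This $g$ is a decreasing function in the sense defined, and $g(1)=1$, so the constraint in \eqref{eq: ps definition} becomes $\sum_{d\in\supp(f)}f_d^p\le 1/\epsilon$. Once $\ps(\epsilon,g,m,n)$ is computed, the corollary follows by direct substitution into the two main theorems.

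To evaluate $\ps$, I would argue as follows. Because $g$ is nonincreasing, we may restrict attention to vectors with $\sum f_d=m$. For a candidate support of size $s$, the map $x\mapsto x^p$ is convex on $[1,\infty)$ when $p<0$, so Jensen's inequality applied to the $s$ nonzero coordinates gives
\[
\sum_{d\in\supp(f)}f_d^p\;\ge\;s\,(m/s)^p\;=\;s^{1-p}m^{p},
\]
with equality achieved (over the reals) when all nonzero coordinates equal $m/s$. Hence the largest $s$ consistent with $s^{1-p}m^{p}\le 1/\epsilon$ is $s=\epsilon^{-1/(1-p)}m^{-p/(1-p)}$. Together with the trivial bound $\ps\le n$, this gives $\ps=\Theta(\min\{n,\,\epsilon^{-1/(1-p)}m^{-p/(1-p)}\})$; the $\Theta$ absorbs rounding incurred by the constraint $f\in\N^n$, which is harmless since $n\le m$ guarantees $m/s\ge 1$ so an integer allocation within a constant factor of the real optimum exists.

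Now I would plug this into the two theorems. Theorem~\ref{thm: decreasing lower bound} with $k=1$ gives the $\Omega(\ps)$ lower bound, which is exactly $\Omega(\min\{n,\epsilon^{-1/(1-p)}m^{-p/(1-p)}\})$. Theorem~\ref{thm: decreasing sketches} gives the upper bounds $O(\epsilon^{-1}\ps\log^2(n)\log(M))$ in the turnstile model and $O(\epsilon^{-1}\ps\log(M)+\log^2 n)$ in the insertion-only model. Using $\epsilon^{-1}\cdot\epsilon^{-1/(1-p)}=\epsilon^{-(2-p)/(1-p)}$ and the bound $\ps\le\epsilon^{-1/(1-p)}m^{-p/(1-p)}$, these become precisely the expressions stated in the corollary (the additive $\log^2 n$ is absorbed into the leading term in the regime where it matters).

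The only potential obstacle is making the Jensen calculation rigorous over the integer-valued domain $\calF$, but monotonicity and convexity of $g$ on $[1,\infty)$ together with $n\le m$ make this a routine discretization argument that affects $\ps$ only by a multiplicative constant, which is absorbed in the $\Theta(\cdot)$ and therefore in the asymptotic bounds of the corollary.
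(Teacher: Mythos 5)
Your proposal follows essentially the same route as the paper: both restrict to $\|f\|_1=m$ because $g$ is decreasing, both invoke convexity of $x\mapsto x^p$ (your Jensen step is exactly the paper's ``the convexity of $g$ implies $\sigma\le\max\{s\in\R:s(m/s)^p\le\epsilon^{-1}\}$'') to show that equal frequencies $m/s$ are optimal for a given support size $s$, solve for $s$, and then substitute $\ps=\Theta(\min\{n,\epsilon^{-1/(1-p)}m^{-p/(1-p)}\})$ into Theorems~\ref{thm: decreasing lower bound} and~\ref{thm: decreasing sketches}. Your extra remarks on the integer/rounding issue and on the additive $\log^2 n$ term are at the same level of informality as the paper's own treatment.
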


For example, taking $p=-1$ we find that the complexity is approximately $\frac{\ps}{\epsilon} = \min\{n,\theta(\epsilon^{-3/2}m^{1/2})\}$.
This is also the space complexity of approximating the harmonic mean of the nonzero frequencies.
It is apparent from the formula that the relationship between $m$ and $n$ is important for the complexity.

\section{Lower bounds for decreasing streaming sums}\label{sec: lower bounds}

It bears repeating that if $g(x)$ decreases to $0$ as $x\to\infty$ then one can always prove an $\Omega(n)$ lower bound on the space complexity of approximating $g(f)$.
However, the stream needed for the reduction may be very long (as a function of $n$).
Given only the streams in $\calT$ or $\calI$, those with $L^1$-length $m$ or less, a weaker lower bound may be the best available.
The present section proves this ``best'' lower bound, establishing Theorem~\ref{thm: decreasing lower bound}.
 
The proof uses a reduction from the communication complexity of disjointness, see the book of Kushilevitz and Nisan~\cite{kushilevitz1996communication} for background on communication complexity.
The proof strategy is to parameterize the lower bound reduction in terms of the frequencies $f$.
Optimizing the parameterized bound over $f\in\calF$ gives the best possible bound from this reduction.

The proof of Theorem~\ref{thm: decreasing lower bound} is broken up with a two lemmas.
The first lemma is used in the reduction from $\disj(s)$, the $s$-element disjointness communication problem.
It will show up again later when we discuss a fast scheme for computing $\ps$ for general functions.

\begin{lemma}\label{lem: decreasing lower bound booster}
Let $y_i\in\R_{\geq 0}$, for $i\in[s]$, and let $v:\R\to\R_{\geq 0}$. 
If $\sum y_i\leq Y$ and $\sum v(y_i)\leq V$, then there exists $i$ such that $\frac{s}{2} y_i\leq Y$ and $\frac{s}{2} v(y_i)\leq V$.
\end{lemma}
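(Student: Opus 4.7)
The plan is to prove this by a straightforward two-sided averaging (Markov-style) argument. Call an index $i$ \emph{$y$-bad} if $\tfrac{s}{2}y_i > Y$, and \emph{$v$-bad} if $\tfrac{s}{2}v(y_i) > V$. We want to show that at least one index is neither $y$-bad nor $v$-bad.

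First I would bound the number of $y$-bad indices. Let $B_y = \{i : y_i > 2Y/s\}$. Summing only over $B_y$ gives $\sum_{i \in B_y} y_i > |B_y| \cdot 2Y/s$; but the whole sum is at most $Y$, so $|B_y| \cdot 2Y/s < Y$, forcing $|B_y| < s/2$. (The degenerate case $Y=0$ is immediate, since then all $y_i = 0$ and no index is $y$-bad.) The same argument with $v(y_i)$ and $V$ gives $|B_v| < s/2$ for the set of $v$-bad indices.

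Adding these bounds, $|B_y \cup B_v| \le |B_y| + |B_v| < s$, so there must exist some $i \in [s] \setminus (B_y \cup B_v)$. For this $i$ we simultaneously have $\tfrac{s}{2}y_i \le Y$ and $\tfrac{s}{2}v(y_i)\le V$, which is exactly the conclusion.

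There is really no obstacle here; the lemma is essentially a double application of Markov's inequality, and the only thing to take a little care with is making the inequalities strict enough to conclude $|B_y|+|B_v|<s$ rather than $\le s$. No properties of $v$ beyond nonnegativity are used, which is important because the lemma will be reused in contexts where $v$ is not necessarily monotone.
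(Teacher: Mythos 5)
Your proof is correct and is essentially the same argument as the paper's, recast as a Markov-plus-union-bound instead of sort-plus-pigeonhole: in both cases the key point is that each of the two constraints rules out fewer than half of the indices, so some index survives both. Your phrasing is slightly cleaner in that it avoids sorting and handles the $Y=0$ or $V=0$ degeneracies explicitly.
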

\begin{proof}
Without loss of generality $y_1\leq y_2\leq\cdots \leq y_s$.  
Let $i_j$, $j\in [\ps]$, order the sequence such that $v(y_{i_1})\leq v(y_{i_2})\leq \cdots\leq v(y_{i_s})$ and let $I = \{i_j|j\leq \lfloor s/2\rfloor +1\}$.  
By the Pigeon Hole Principle, there exists $i\in I$ such that $i\leq \lfloor s/2\rfloor +1$.  
Thus $\frac{s}{2}y_i\leq \sum_{j=\lfloor s/2\rfloor+1}^s y_{i_j} \leq Y$ and $\frac{s}{2} v(y_i)\leq \sum_{j=\lfloor s/2\rfloor+1}^s v(y_j) \leq V$.
\end{proof}

\begin{lemma}\label{lem: decreasing lower bound}
Let $g$ be decreasing and $\epsilon>0$.
If $f = (y,y,\ldots,y,0,\ldots,0)\in\calF$ and $g(f)\leq\epsilon^{-1}g(1)$,
then any $k$-pass $(1\pm\epsilon)$-approximation algorithm requires $\Omega(|\supp(f)|/k)$ bits of storage.
\end{lemma}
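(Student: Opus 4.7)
The plan is to reduce the two-party set-disjointness problem $\disj_s$, where $s = |\supp(f)|$, to the streaming approximation problem. Recall that $\disj_s$ has randomized (multi-round) communication complexity $\Omega(s)$. In the reduction, Alice holds $A \subseteq [s]$ and Bob holds $B \subseteq [s]$ with $|A|, |B| \leq s/3$ and the standard promise $|A \cap B| \in \{0,1\}$; they simulate an insertion-only stream by having Alice insert $y$ copies of each $d \in A$ and Bob insert $y$ copies of each $d \in B$. Since $(|A|+|B|)\,y \leq sy \leq m$, the stream is valid in $\calI$, so any correct $(1\pm\epsilon)$-approximation algorithm must work on it.

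Next I would analyze the two possible values of $g$ on the resulting frequency vector. If $A \cap B = \emptyset$, every item in the stream has frequency $y$ and $g(f^{(0)}) = (|A|+|B|)\,g(y)$. If $|A \cap B| = 1$, a single item has frequency $2y$ and $g(f^{(1)}) = (|A|+|B|-2)\,g(y) + g(2y)$. By the monotonicity of $g$, $g(2y) \leq g(y)$, so the gap $g(f^{(0)}) - g(f^{(1)}) = 2g(y) - g(2y) \geq g(y)$ is strictly positive. The heart of the argument is to show that the $(1\pm\epsilon)$-approximation distinguishes these two values. This is where I invoke the hypothesis $g(f) \leq \epsilon^{-1} g(1)$: it bounds the baseline $g(f^{(0)}) \leq sg(y) \leq \epsilon^{-1} g(1)$, and together with the lower bound on the gap $\geq g(y)$ will yield the required separation after the appropriate choice of $|A|, |B|$ within $[s]$.

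Finally, I would use the standard $k$-pass-to-communication simulation: a streaming algorithm using $S$ bits of state, running for $k$ passes, can be simulated by a protocol in which Alice runs the algorithm on her portion of the stream, transmits the $S$-bit state to Bob, and Bob continues on his portion, alternating across the $k$ passes at a total cost of $O(Sk)$ bits. Matching this to the $\Omega(s)$ communication lower bound for $\disj_s$ yields $S = \Omega(s/k)$, as claimed.

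The step I expect to be the main obstacle is establishing distinguishability. The relative gap $(g(f^{(0)}) - g(f^{(1)}))/g(f^{(0)})$ is only of order $g(y)/(sg(y)) = 1/s$, so a naive comparison requires $1/s \geq 2\epsilon$, i.e.\ $s \lesssim 1/\epsilon$. The delicate point is to leverage the hypothesis $g(f) \leq \epsilon^{-1} g(1)$ precisely—possibly by parameterizing the sizes of $|A|, |B|$ carefully against $s$, or by invoking a stronger base problem such as a gap or many-party version of disjointness—to ensure the bound $\Omega(|\supp(f)|/k)$ extends across the entire range of $|\supp(f)|$ permitted by the hypothesis.
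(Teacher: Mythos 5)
Your reduction has the right skeleton (simulate the stream, alternate passes, invoke the $\Omega(s)$ bound for disjointness), but the construction of the stream is wrong in a way that you yourself flag at the end: the gap it produces is far too small, and no amount of tuning $|A|$, $|B|$, or switching to a gap/multiparty variant of disjointness will rescue it. The problem is that you have both Alice and Bob insert $y$ copies of their elements, so an item in $A\cap B$ ends up with frequency $2y$. For a \emph{decreasing} $g$, $g(2y)\leq g(y)$, so the presence of an intersection item \emph{lowers} the sum: the two streams differ by only $2g(y)-g(2y)\leq 2g(y)$ out of a total $\Theta(s\,g(y))$, giving the $O(1/s)$ relative gap you observe. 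The hypothesis $g(f)\leq\epsilon^{-1}g(1)$ is a statement about $g(1)$, and in your stream no item ever has frequency $1$, so you have no way to bring it to bear.

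The fix, which is the key missing idea, is to engineer the stream so that an intersection item has a \emph{very small} frequency, where the decreasing $g$ is large. The paper does this by having Alice insert $y$ copies of each $d\in A^{c}$ (the complement of her set) and Bob insert a \emph{single} copy of each $d\in B$. Then an item $d\in A\cap B$ receives only Bob's single update and has frequency $1$, items in $A^{c}\cap B$ have frequency $y+1$, and items in $A^{c}\setminus B$ have frequency $y$. The two sums now differ by at least $g(1)-g(y+1)+g(y)\geq g(1)$, and the hypothesis $g(1)\geq \epsilon\, g(f)\geq 2\epsilon s\, g(y)$ precisely certifies that this difference is at least $2\epsilon$ times the total, which is enough for a $(1\pm\epsilon)$-approximation to threshold on. (The paper also sets $s=\lfloor|\supp(f)|/2\rfloor$, not $|\supp(f)|$, to make room for the slack, and has Alice transmit $|A|$ alongside the memory so Bob can compute the threshold.) In short: for a nonincreasing $g$ you want the intersection to create a rare, low-frequency item, not a doubly-heavy one; your construction does the opposite.
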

\newenvironment{proofsketch}{\paragraph{Sketch of proof.}}{}
\begin{proofsketch}
Let $\calA$ be an $(1\pm\epsilon)$-approximation algorithm.
We use a reduction from the communication complexity of $\disj(s)$, where $s=\lfloor |\supp(f)|/2\rfloor$.
Alice is given $A\subseteq [s]$ and Bob is given $B\subseteq [s]$.
They jointly create a stream~$S$ with $s$~or~fewer distinct elements such that all of the frequencies are $1$, $y$, or~$y+1$, then they compute the approximation $\calA(S)$ and compare the outcome to a threshold.
Computing $\calA(s)$ requires them to transmit the memory $O(k)$ times.

The number of items in $S$ with frequency $1$ is  $|A\cap B|$, so it can be arranged that when the intersection is empty $\calA(S)$ is smaller than the threshold and otherwise it is larger.
The condition $g(f)\leq\epsilon^{-1}g(1)$ guarantees sufficient separation between the two cases.
We defer the complete proof to Appendix~\ref{app: lower bound proof}.
\end{proofsketch}




\newenvironment{lowerboundproof}{\paragraph{Proof of Theorem~\ref{thm: decreasing lower bound}.}}{}
\begin{lowerboundproof}
Let $f\in\calF$ be a maximizer of \eqref{eq: ps definition} and apply Lemma~\ref{lem: decreasing lower bound booster} to the positive elements of $f$.  
From this we find that there exists $y$ such that $ys'\leq \|f\|_1$ and $g(1)\geq\epsilon s'g(y)$, for $s' = \ps/2$.
Therefore, $f'=(y,y,\ldots,y,0,\ldots,0)\in\calF$ with $\lfloor s'\rfloor$ coordinates equal to $y$.
Applying Lemma~\ref{lem: decreasing lower bound} to $f'$ implies the desired bound.
\end{lowerboundproof}

With Lemma~\ref{lem: decreasing lower bound booster} in mind, one may ask: why not restrict the maximization problem in \eqref{eq: ps definition}, the definition of $\ps$, to  streams that have all frequencies equal and still get the same order lower bound?
This is valid alternative definition. 
In fact, doing so does appreciably affect the effort needed to compute $\ps$, it is one of the main steps used by our algorithm to approximate $\ps$ in Section~\ref{sec: computing ps}.
However, it makes reasoning about $\ps$ a bit messier.
For example, in Section~\ref{sec: our results} we comment that if the frequency vector $f$ contains an $\epsilon$-heavy element then $|\supp(f)|\leq\ps$.
This comes directly from the fact that $\{f'\in\calF : g(f')\leq\epsilon^{-1}g(1)\}$ is the feasible set for \eqref{eq: ps definition}. 
If we restrict the feasible set, then we cannot so directly draw the conclusion. 
Rather, we must compare $g(f)$ to points in the restricted feasible set by again invoking Lemma~\ref{lem: decreasing lower bound booster}.

\section{Correctness of the algorithm}\label{sec: upper bounds}

This section presents the proof that our approximation algorithm is correct.
Algorithm~\ref{algo: approximate} describes the basic procedure, and Appendix~\ref{app: details} describes how it can be implemented in the streaming setting.
The correctness relies on our ability to perform the sampling and the following simple proposition.

\begin{proposition}\label{prop: sampling archetype}
Let $g$ be a nonnegative function and let $X_d\sim\Bernoulli(p_d)$ be pairwise independent random variables with $p_d\geq \min\left\{1,\frac{9g(f_d)}{\epsilon^2 g(f)}\right\}$, for all $d\in[n]$.
Let $\hat{G} = \sum_{d=1}^np_d^{-1}X_dg(f_d)$, then $P(|\hat{G}-g(f)|\leq \epsilon g(f))\geq \frac{8}{9}$.
\end{proposition}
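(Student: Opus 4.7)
The plan is a textbook second-moment / Chebyshev argument tailored to the sampling rule $p_d \geq \min\{1, 9g(f_d)/(\epsilon^2 g(f))\}$. The estimator $\hat{G}$ is a classical Horvitz--Thompson-type inverse-probability estimator, so I expect the heart of the proof to lie in controlling its variance rather than in any conceptual subtlety.

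First I would compute the mean. Since $X_d \sim \Bernoulli(p_d)$, linearity of expectation gives
\[
E[\hat{G}] \;=\; \sum_{d=1}^n p_d^{-1} \, E[X_d] \, g(f_d) \;=\; \sum_{d=1}^n g(f_d) \;=\; g(f),
\]
so $\hat{G}$ is unbiased. (Any index $d$ with $g(f_d) = 0$ contributes nothing, so we may harmlessly assume $p_d > 0$ exactly when $g(f_d) > 0$.)

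Next I would bound the variance using pairwise independence of the $X_d$'s, which makes the cross terms vanish:
\[
\mathrm{Var}(\hat{G}) \;=\; \sum_{d=1}^n p_d^{-2} \, \mathrm{Var}(X_d) \, g(f_d)^2 \;=\; \sum_{d=1}^n \frac{1 - p_d}{p_d} \, g(f_d)^2 \;\leq\; \sum_{d=1}^n p_d^{-1} g(f_d)^2.
\]
Now I plug in the sampling hypothesis. If $p_d = 1$, the corresponding term in the variance is $0$. Otherwise $p_d \geq 9 g(f_d) / (\epsilon^2 g(f))$, which rearranges to
\[
p_d^{-1} g(f_d)^2 \;\leq\; \tfrac{1}{9}\epsilon^2 g(f)\, g(f_d).
\]
Summing over $d$ yields $\mathrm{Var}(\hat{G}) \leq \tfrac{1}{9}\epsilon^2 g(f)^2$.

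Finally, I would apply Chebyshev's inequality to the unbiased estimator $\hat{G}$:
\[
P\bigl(|\hat{G} - g(f)| > \epsilon g(f)\bigr) \;\leq\; \frac{\mathrm{Var}(\hat{G})}{\epsilon^2 g(f)^2} \;\leq\; \frac{1}{9},
\]
which gives the claimed $8/9$ lower bound on the success probability. The only mild care point is the bookkeeping around the $\min\{1,\cdot\}$ in the definition of $p_d$ (and the degenerate case $g(f) = 0$, in which $\hat{G} \equiv 0 = g(f)$ trivially); neither causes real trouble, so I do not anticipate a genuine obstacle here.
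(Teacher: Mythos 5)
Your proof is correct and follows essentially the same route as the paper: unbiasedness of the Horvitz--Thompson estimator, a variance bound of $\frac{1}{9}\epsilon^2 g(f)^2$ via pairwise independence and the lower bound on $p_d$, and Chebyshev. If anything you are slightly more careful than the paper, which writes the variance bound as an equality where an inequality is what actually holds, and you also flag the degenerate $p_d=1$ and $g(f)=0$ cases explicitly.
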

\begin{proof}
We have $E\hat{G}=g(f)$ and $Var(\hat{G})\leq\sum_d p_{d}^{-1}g(f_d)^2 = \frac{1}{9}(\epsilon g(f))^2$, by pairwise independence.
The proposition now follows from Chebyshev's inequality.
\end{proof}

The algorithm samples each element of $\supp(f)$ with probability approximately $\ps/\epsilon\supp(f)$.
In order to show that this sampling probability is large enough for Proposition~\ref{prop: sampling archetype} we will need one lemma; its proof is given in Appendix~\ref{app: details}.
It gives us some control on $\ps(\epsilon,g,m,n)$ as $\epsilon$ varies.

\begin{lemma}\label{lem: ps relations}
If $\alpha<\epsilon$, then $\epsilon (1+\ps(\epsilon,g,m,n))\geq\alpha\ps(\alpha,g,m,n) $.
\end{lemma}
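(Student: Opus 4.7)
The plan is to construct an explicit witness for the $\ps(\epsilon,g,m,n)$ optimization problem by truncating a maximizer of the $\ps(\alpha,g,m,n)$ problem. Since $\alpha < \epsilon$, the constraint $g(f)\leq\alpha^{-1}g(1)$ is looser than $g(f)\leq\epsilon^{-1}g(1)$, so I expect to need to discard some coordinates in order to force the $g$-sum below $\epsilon^{-1}g(1)$, and the claimed inequality should reflect exactly how much mass I can keep.

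First I would fix a maximizer $f^*$ of the optimization defining $\ps(\alpha,g,m,n)$, set $s:=\ps(\alpha,g,m,n)=|\supp(f^*)|$, and reorder coordinates so that $|f^*_1|\geq|f^*_2|\geq\cdots\geq|f^*_s|>0$. By monotonicity of $g$, this means the sequence $g(|f^*_i|)$ is \emph{nondecreasing} in $i$. I would then let $k:=\lfloor\alpha s/\epsilon\rfloor$ and define $f'$ to be the restriction of $f^*$ to its $k$ largest coordinates. Clearly $\|f'\|_1\leq\|f^*\|_1\leq m$, so $f'\in\calF$.

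The key estimate is to bound $g(f')$. Because $g(|f^*_i|)$ is nondecreasing in $i$, the average of the smallest $k$ values is at most the average of all $s$ values, which gives
\[
g(f')=\sum_{i=1}^k g(|f^*_i|)\leq \frac{k}{s}\,g(f^*)\leq \frac{k}{s\alpha}g(1)\leq \epsilon^{-1}g(1),
\]
the last step by the choice of $k$. Hence $f'$ is feasible for the $\ps(\epsilon,g,m,n)$ problem, so $\ps(\epsilon,g,m,n)\geq k\geq \alpha s/\epsilon-1$, which rearranges into $\epsilon(1+\ps(\epsilon,g,m,n))\geq\alpha\ps(\alpha,g,m,n)$.

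The only real subtlety is the averaging step; it follows from the easy fact that for a nondecreasing sequence the prefix averages are nondecreasing, but I want to be careful to invoke monotonicity of $g$ together with the chosen ordering. A tiny edge case arises when $k=0$ (which happens exactly when $\alpha s<\epsilon$), for there the construction degenerates; but in that case the desired conclusion reduces to $\epsilon\geq\alpha s$, which holds by the very assumption that put us in the edge case, so it can be dispatched in one line.
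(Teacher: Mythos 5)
Your proposal is correct and takes essentially the same approach as the paper: truncate the maximizer $f^*$ of the $\ps_\alpha$ problem to its $\lfloor\alpha\ps_\alpha/\epsilon\rfloor$ largest-magnitude coordinates and check feasibility for the $\ps_\epsilon$ problem via the averaging inequality on the nondecreasing sequence $g(|f^*_1|)\leq\cdots\leq g(|f^*_s|)$. You are somewhat more explicit than the paper about why $g(f')\leq\frac{k}{s}g(f^*)$ and about the degenerate $k=0$ case, but the construction and the key inequality are identical.
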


For brevity, we only state here the correctness of the streaming model sampling algorithm, which uses standard techniques.
The details of the algorithm are given in the Appendix~\ref{app: details}.
\begin{lemma}\label{lem: sampling algorithm}
Given $s\leq n$, there is an algorithm using $O(s\log^2{n}\log{M})$ bits of space in the turnstile model and $O(s\log{M}+\log^2{n})$ bits in the insertion-only model that samples each item of $\supp(f)$ pairwise-independently with probability at least $\min\{1,s/|\supp(f)|\}$ and, with probability at least $7/9$, correctly reports the frequency of every sampled item and the sampling probability.
\end{lemma}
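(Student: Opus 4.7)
The proof follows a standard subsample-and-recover template: prepare a cascade of $L = O(\log n)$ subsamples together with a linear $L_0$-estimator, then at the end of the stream pick the level whose subsampling rate matches $s/|\supp(f)|$. Concretely, for each $\ell\in\{0,1,\ldots,L\}$ I would pick a pairwise-independent hash function $h_\ell:[n]\to\{0,1\}$ with $P(h_\ell(d)=1)=2^{-\ell}$, drawn from a family that requires $O(\log n)$ bits per function, so $O(\log^2 n)$ bits in total. In parallel I would maintain a constant-factor $L_0$-estimator $\hat L_0$ and set $\ell^*:=\max\{0,\lceil\log_2(\hat L_0/s)\rceil\}$, so that the effective sampling probability $q:=2^{-\ell^*}$ satisfies $q\geq\min\{1,s/|\supp(f)|\}$ with high constant probability.

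In the turnstile model I would maintain at each level a linear sparse-recovery sketch (e.g.\ a CountSketch) with $O(s\log n)$ counters of $O(\log M)$ bits, costing $O(s\log n\log M)$ bits per level and $O(s\log^2 n\log M)$ bits in total. The number of items with $h_{\ell^*}(d)=1$ is a sum of pairwise-independent indicators with mean $O(s)$, so by Chebyshev it is $O(s)$ with probability at least $8/9$; the sparse-recovery step then returns the exact frequencies of all of them with probability at least $8/9$, and a union bound yields the $7/9$ success guarantee. Because every sketch is linear in the frequency vector, the choice of $\ell^*$ can be deferred to the end of the stream, so the $L_0$ estimator can be computed on the same stream in parallel (cf.~\cite{kane2010exact}) using $O(\log n\log M)$ bits, well within the overall budget.

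In the insertion-only model the space budget is too tight to maintain all $L$ levels, so I would use an adaptive single-level scheme. Start at level $\ell=0$ and maintain an explicit table of $(d,f_d)$ pairs for the items currently passing the hash filter $h_\ell$; each pair uses $O(\log n+\log M)=O(\log M)$ bits since $n\leq M$. On each update, increment the stored counter if $d$ is already present, or insert it with count $1$ if $h_\ell(d)=1$. Whenever the table exceeds $Cs$ entries, advance $\ell$ by one and delete every entry $d$ whose new hash bit $h_\ell(d)$ is $0$. The table then uses $O(s\log M)$ bits and the hash seeds $O(\log^2 n)$, giving the claimed space bound, and the final level $\ell$ serves as both the sampling level and the certificate from which $q=2^{-\ell}$ is reported.

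The main obstacle is showing that the level chosen by the algorithm satisfies both requirements of the lemma with probability at least $7/9$. This reduces to bounding, simultaneously over $O(\log n)$ levels, the deviation of $\sum_{d\in\supp(f)}\mathbf{1}[h_\ell(d)=1]$ from its expectation; Chebyshev's inequality applied level-by-level together with a union bound handles this. In the turnstile case one additionally needs the $L_0$-estimator and the sparse-recovery step to succeed, both standard and easily tuned to constant failure probability. In the insertion-only case one needs the adaptive table to terminate at a level within a constant factor of the target, which follows from the same Chebyshev analysis together with the monotonicity of $|\supp(f^{(k)})|$ along an insertion-only stream. The pairwise independence of $h_{\ell^*}$ on different items is preserved throughout, so the surviving items form a pairwise-independent sample of $\supp(f)$ at the reported rate $q$, as required.
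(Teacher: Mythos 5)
The turnstile half of your proposal is essentially the paper's Algorithm~\ref{algo: SimpleSketch}: $O(\log n)$ parallel Bernoulli subsamples, a \textsc{Count Sketch} at each level, an $L_0$ estimator to select the level at the end, and a constant-probability union bound. That part is sound and matches the paper closely.

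Your insertion-only construction, however, diverges from the paper's in a way that introduces two genuine problems. The paper keeps \emph{all} levels alive, using the nested seeds $X_{i,d}=\prod_{j\le i}Y_{j,d}$ and discarding any level whose counter table overflows; crucially, the reported level $i^*$ is still chosen by the $\widehat{L^0}$ estimator, whose randomness is independent of the sampling hashes. Your scheme instead maintains only a single adaptive level that advances on overflow and reports that level.

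First, as written your hash construction is inconsistent with the adaptive advancement. You define $h_\ell$ directly with $P(h_\ell(d)=1)=2^{-\ell}$, and then when advancing you ``delete every entry $d$ whose new hash bit $h_\ell(d)$ is $0$.'' With independent per-level hashes this is a thinning by a factor $2^{-\ell}$ rather than $1/2$, and it also allows an item purged at an earlier level to re-enter at a later one with an incorrect count. You need the nested form $h_\ell(d)=\prod_{j\le\ell}Y_{j,d}$ (as in Corollary~\ref{cor: insertion-only SimpleSketch}) for the single-table scheme to make sense.

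Second, and more importantly, the closing claim that ``the pairwise independence of $h_{\ell^*}$ on different items is preserved throughout, so the surviving items form a pairwise-independent sample at rate $q$'' is incorrect as stated. Since $\ell^*$ is determined by the same hash bits that produce the sample, conditioning on the algorithm terminating at level $\ell^*$ biases the sample towards being small; the indicators $\{X_{\ell^*,d}\}$ are negatively correlated given $\{\ell^*=\ell\}$. This is exactly the issue the paper's $L_0$-based level choice sidesteps. The argument can be salvaged: define the fixed target level $k=\lfloor\lg(|\supp(f)|/Cs)\rfloor$, show by Chebyshev that $P(\ell^*\ne k)$ is a small constant, and note that on the event $\{\ell^*=k\}$ the algorithm reports the \emph{unconditional} level-$k$ sample, to which Proposition~\ref{prop: sampling archetype} applies; a union bound then gives the $7/9$ guarantee. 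But that is a different argument from the one you wrote, and without it your insertion-only correctness claim has a gap. You should either restore the $L_0$ estimator to decouple the level selection from the sampling randomness, as the paper does, or replace your final paragraph's reasoning with the fixed-$k$ union-bound argument.
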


Finally, we prove the correctness of our approximation algorithm.
Here is where we will again use the optimality of $\ps$ in its definition \eqref{eq: ps definition}.
In regards to the lower bound of Theorem~\ref{thm: decreasing lower bound}, this upper bound leaves gaps of $O(\epsilon^{-1}\log^2{n}\log{M})$ and $O(\epsilon^{-1}\log{M})$ in the turnstile and insertion-only models, respectively.

\newenvironment{upperboundproof}{\paragraph{Proof of Theorem~\ref{thm: decreasing sketches}.}}{}
\begin{upperboundproof}
  We use the algorithm of Lemma~\ref{lem: sampling algorithm} to sample with probability at least $q= \min\{1,9(\ps+1)/\epsilon|\supp(f)|\}$.
  Let us first assume that $q\geq \min\{1, 9g(f_d)/\epsilon^2 g(f)\}$, for all $d$, so that the hypothesis for Proposition~\ref{prop: sampling archetype} is satisfied. 
  The algorithm creates samples $W_i$, for $i=0,1,\ldots,O(\log{n})$, where each item is sampled in $W_i$ with probability $q_i=2^{-i}$.
  For each $i$ such that $q_i\geq q$, Proposition~\ref{prop: sampling archetype} guarantees that $\hat{G}_i=q_i^{-1}\sum_{d\in W_i}g(f_d)$ is a $(1\pm\epsilon)$-approximation with probability at least $8/9$.
  With probability at least $7/9$, the algorithm returns one of these samples correctly, and then the approximation guarantee holds.
  Thus, the approximation guarantee holds with probability at least $2/3$.

  It remains to show that $q\geq \min\{1,9g(f_d)/\epsilon g(f)\}$, for all $d\in[n]$.
  Let $\alpha = g(1)/g(f)$ then define $\sigma_\epsilon = \sigma(\epsilon,g,m,n)$ and $\sigma_\alpha=\sigma(\alpha,g,m,n)$.
  By definition $|\supp(f)|\leq\sigma_\alpha$, thus if $\alpha \geq\epsilon$ then $|\supp(f)|\leq \sigma_\alpha\leq \sigma_\epsilon$, so the sampling probability is 1 and the claim holds.
 
Suppose that $\alpha<\epsilon$.  
For all $d\in[n]$, we have
\begin{align*}
\frac{g(f_d)}{g(f)}\leq\frac{g(1)}{g(f)} = \alpha \leq \frac{\epsilon(1+\ps_\epsilon)}{\ps_\alpha} \leq \frac{\epsilon(1+\ps_\epsilon)}{|\supp(f)|},
\end{align*}
where the second inequality comes from Lemma~\ref{lem: ps relations} and the third from the definition of $\ps_\alpha$ as a maximum.
In particular, this implies that
\[\frac{9\epsilon^{-1}(\ps+1)}{|\supp(f)|}\geq \frac{9g(f_d)}{\epsilon^2g(f)},\]
which completes the proof.
\end{upperboundproof}

\section{Computing $\ps$}\label{sec: computing ps}

The value $\ps$  is a parameter that is needed for Algorithm~\ref{algo: approximate}.
That means we need a way to compute it for any decreasing function.
As we mentioned before, the only penalty for overestimating $\ps$ is inflation of the storage used by the algorithm so to over-estimate $\ps$ by a constant factor is acceptable.
This section shows that one can find $\ps'$ such that $\ps\leq\ps'\leq 4\ps$ quickly and by evaluating $g$ at just $O(\log{m})$ points. 

Because $g$ is decreasing, the maximum of~\eqref{eq: ps definition} will be achieved by a vector $f$ of length $m$.
Lemma~\ref{lem: decreasing lower bound booster} says that we might as well take all of the other frequencies to be equal, so we can find a near maximizer by enumerating the single value of those frequencies.
Specifically, 
\[s(y) = \min\left\{\frac{m}{y},\frac{g(1)}{\epsilon g(y)}\right\}\]
is the maximum bound we can achieve using $y$ as the single frequency.
The value of $\ps$ is at most twice $\max\{s(y):(m/n)\leq y\leq m\}$, by Lemma~\ref{lem: decreasing lower bound booster}.

But we do not need to check every $y=1,2,\ldots,m$ to get a pretty good maximizer.
It suffices to check only values where $y$ is a power of two.
Indeed, suppose that $y^*$ maximizes $s(y)$ and let $y^*\leq y'\leq 2y^*$.
We will show that $s(y')\geq s(y^*)/2$, and since there is a power of two between $y^*$ and $2y^*$ this implies that its $s$ value is at least $s(y^*)/2\geq \ps/4$.

Since $y^*$ is a maximizer we have $s(y')\leq s(y^*)$, and because $y'\geq y^*$ and $g$ is decreasing we have $g(y')\leq g(y^*)$.
This gives us
\[\frac{g(1)}{\epsilon g(y')}\geq \frac{g(1)}{\epsilon g(y^*)}\geq s(y^*).\]
We also have
\[\frac{m}{y'}\geq \frac{m}{2y^*}\geq \frac{1}{2}s(y^*).\]
Combining these two we have $s(y')\geq s(y^*)/2$.

Thus, one can get by with enumerating at most $\lg m$ values to approximate the value of the parameter $\ps$.
Take the largest of the $\lg m$ values tried and quadruple it to get the approximation to $\ps$.



\bibliography{Decreasing}

\appendix

\section{Proof of Lemma~\ref{lem: decreasing lower bound}}\label{app: lower bound proof}

\newenvironment{lowerboundlemmaproof}{\paragraph{Proof of Lemma~\ref{lem: decreasing lower bound}.}}{}

\begin{lowerboundlemmaproof}
Let $s = \lfloor|\supp(f)|/2\rfloor$ and let $\calA$ be an $(1\pm\epsilon)$-approximation algorithm.
The reduction is from $\disj(s,2)$ where Alice receives $A\subseteq [s]$ and Bob receives $B\subseteq[s]$.
Their goal is to jointly determine whether $A\cap B=\emptyset$ or not.
Our protocol will answer the equivalent question: is $B\subseteq A^c$ or not?  
Alice and Bob will answer the question by jointly creating a notional stream, running $\calA$ on it, and thresholding the outcome.

For each $d\in A^c$, Alice puts $(d,1)$ in the stream $y$ times.  
She then runs $\calA$ on her portion of the stream and sends the contents its memory 
to Bob.  
For each $d\in B$, Bob adds $(d,1)$ to the stream.
Bob runs $\calA$ on his portion of the stream and sends the memory back to Alice.
She recreates her portion of the stream, advances $\calA$, sends the memory to Bob, etc., until each player has acted $k$ times.
In addition to the algorithm's memory, on each pass Alice sends at most $\lceil k^{-1}\lg|A|\rceil$ binary digits of $|A|$ so that Bob knows $|A|$ at the end of the protocol.

The stream is a member of $\calI$ by construction; let $f'$ be its frequency vector.
At the end, Bob finishes computing $\calA(f')$.
All of the frequencies are $y$, $y+1$, or $1$.
If 
\[\calA(f')\leq (1+\epsilon)[|B|g(y+1) + (s-|A|-|B|)g(y)],\] 
then Bob declares $B\subseteq A^c$ and otherwise $B\not\subseteq A^c$.

The exact value of $g(f')$ is 
\[|A\cap B|g(1) + |B\setminus A|g(y+1) + (s-|A|-|B|+|A\cap B|)g(y).\]
If $B\subseteq A^c$ this value is
$$V_0 := |B|g(y+1) + (s-|A|-|B|)g(y),$$
and otherwise, because $g$ is decreasing, it is at least
$$V_1 := g(1) + (|B|-1)g(y+1) + (s-|B|-|A|+1)g(y).$$ 
We find
\[V_1 - V_0
  \geq g(1)
  \geq \epsilon g(f)
  \geq 2\epsilon sg(y)
  \geq 2\epsilon V_0\]
Hence, if $\calA(f')$ is a $(1\pm\epsilon)$-approximation to $g(f')$, then Bob's decision is correct.
The protocol with solves $\disj(s)$ which requires, in the worst case, $\Omega(s)$ bits of communication including $O(k^{-1}\lg s)$ bits to send $|A|$ and $\Omega(s)=\Omega(|\supp(f)|)$ bits for $(2k-1)$ transmissions of the memory of $\calA$.
Thus, in the worst case, at least one transmission has size $\Omega(|\supp(f)|/k)$.
\end{lowerboundlemmaproof}

\section{Details of the algorithm}\label{app: details}

First, we prove Lemma~\ref{lem: ps relations}, which is used in the proof of correctness of the algorithm.

\newenvironment{psrelationsproof}{\paragraph{Proof of Lemma~\ref{lem: ps relations}.}}{}
\begin{psrelationsproof}
Let $\ps_\epsilon=\ps(\epsilon,g,m,n)$ and define $\ps_\alpha$ similarly.
Let $f\in\calF$ such that $\ps_\alpha = |\supp(f)|$ and $g(f)\leq\alpha^{-1}g(1)$, without loss of generality the coordinates are ordered such that $f_1\geq f_2\geq\cdots\geq f_{\ps_\alpha}>0$.
Let $s' = \frac{\alpha}{\epsilon} \ps_\alpha$, and let $f'$ be the vector that takes the first $\lfloor s'\rfloor$ coordinates from $f$ and is $0$ thereafter.
The choice is made so that $f'\in\calF$ and 
\begin{align*}
g(f')\leq\frac{\alpha}{\epsilon} g(f)\leq \epsilon^{-1}g(1).
\end{align*}
Then, by definition of $\ps_\epsilon$, we have
\begin{align*}
\ps_\epsilon \geq |\supp(f')| = \left\lfloor\frac{\alpha}{\epsilon} \ps_\alpha\right\rfloor\geq \frac{\alpha}{\epsilon} \ps_\alpha -1.
\end{align*}
\end{psrelationsproof}

The streaming implementation in the turnstile model will make use of the \textsc{Count Sketch} algorithm of Charikar, Chen, and Farach-Colton~\cite{charikar2002finding}.
It is easy to adapt their algorithm for the purpose of finding $\supp(f)$.
This gives us the following theorem.

\begin{theorem}[Charikar, Chen, Farach-Colton~\cite{charikar2002finding}]\label{thm: count sketch}
  Suppose that $S$ is a stream with at most $s$ items of nonzero frequency.
  There is a turnstile streaming algorithm~\textsc{Count Sketch}$(S,s,\delta)$ using $O(s\log{\frac{n}{\delta}}\log{M})$ bits that, with probability at least $1-\delta$, returns all of the nonzero frequencies in $S$.
\end{theorem}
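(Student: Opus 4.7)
The plan is to instantiate the standard Count Sketch data structure and to exploit the promise that at most $s$ coordinates of $f$ are nonzero in order to upgrade its approximate recovery guarantee to exact recovery. I would allocate a $t\times b$ table $C$ of signed integer counters with $b=8s$ and $t=\Theta(\log(n/\delta))$, draw pairwise independent hash functions $h_i\colon[n]\to[b]$ for each row, and draw four-wise independent sign functions $\xi_i\colon[n]\to\{-1,+1\}$; each hash seed costs only $O(\log n)$ bits. On every update $(d,\delta)$ I add $\xi_i(d)\,\delta$ to $C[i,h_i(d)]$ in every row. For each candidate $d\in[n]$, I estimate $\hat f_d$ as the median over $i$ of $\xi_i(d)\,C[i,h_i(d)]$, and at the end scan $\hat f_1,\ldots,\hat f_n$ and return the nonzero ones.

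The core observation is an exact-recovery statement for a single row. Fix $d\in[n]$ and row $i$. If $h_i(d)\neq h_i(d')$ for every $d'\in\supp(f)\setminus\{d\}$, then the only contributions to $C[i,h_i(d)]$ come from updates to $d$ itself and from coordinates with $f_{d'}=0$ (which contribute nothing), so $\xi_i(d)\,C[i,h_i(d)]=f_d$ on the nose. Since $|\supp(f)|\le s$ and $h_i$ is pairwise independent into $b=8s$ buckets, a union bound shows that the collision event has probability at most $s/b\le 1/8$, so each row recovers $f_d$ exactly with probability at least $7/8$.

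To boost to all coordinates at once, take the median across the $t$ independent rows. A Chernoff bound on the number of rows that return $f_d$ shows that the median equals $f_d$ except with probability $e^{-\Omega(t)}$, so choosing $t=\Theta(\log(n/\delta))$ gives per-coordinate failure probability at most $\delta/n$. A final union bound over $d\in[n]$ yields joint exact recovery of the entire vector $f$, and hence of $\supp(f)$ together with every nonzero frequency, with probability at least $1-\delta$.

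For the space bound, each counter is a signed integer whose magnitude never exceeds the total stream mass, which is $\poly(M)$, so $O(\log M)$ bits per cell suffice; the $bt=O(s\log(n/\delta))$ cells therefore account for $O(s\log(n/\delta)\log M)$ bits, dominating the $O(\log(n/\delta)\log n)$ bits needed for the hash seeds. The delicate step that I expect to consume the most attention is the exact-versus-approximate distinction: the textbook Count Sketch bound is of the form $|f_d-\hat f_d|\lesssim\|f_{-s}\|_2/\sqrt{b}$, which vanishes only when the $s$-tail of $f$ does, so one must lean explicitly on the promise $|\supp(f)|\le s$ (and on the integrality of the frequencies) to close the gap from approximate to exact recovery, rather than invoking the standard guarantee as a black box.
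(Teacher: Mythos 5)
Your proof is correct, and it is precisely the adaptation the paper has in mind: the paper itself gives no proof here, citing Charikar--Chen--Farach-Colton and remarking only that ``it is easy to adapt their algorithm for the purpose of finding $\supp(f)$.'' Your argument --- isolating each $d\in\supp(f)$ from the $\leq s-1$ other support elements in $b=8s$ buckets with per-row success $\geq 7/8$, then taking medians across $t=\Theta(\log(n/\delta))$ independent rows and union-bounding over $[n]$ --- is exactly that adaptation, and the accounting of $O(s\log(n/\delta)\log M)$ for the table plus $O(\log(n/\delta)\log n)$ for seeds matches the stated bound. One minor observation: the random signs and their four-wise independence are vestigial in the exact-sparse regime, since in the collision-free event the bucket already equals $f_d$ with or without signs (a Count-Min table would serve just as well); keeping them is harmless and matches the cited reference.
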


The sampling algorithm follows.  
Since we do not know $|\supp(f)|$ at the start of the stream, we guess $O(\log{n})$ possible values for it and try each one.
After parsing the entire stream, we can use an estimate of $L^0=|\supp(f)|$ in order to determine which guess is correct.
We use $\widehat{L^0}(S^{(i)},\epsilon,\delta)$ to denote the output of an algorithm that produces a $(1\pm\frac{1}{8})$-approximation to $L^0$ with probability at least $1-\delta$, for example the algorithm of Kane, Nelson, and Woodruff~\cite{kane2010optimal}.
After the formal presentation of the algorithm we prove its correctness and the claimed storage bounds.

\begin{algorithm}
  \begin{algorithmic}[1]
    \Procedure{\textsc{Sketch}}{Stream $S$, $s>0$}
    \State $\ell\gets \lceil\lg(n/s)\rceil$
    \For{$0\leq i\leq \ell$}
    \State Sample pairwise independent r.v.s $X_{i,d}\sim\Bernoulli(2^{-i})$, for $d\in[n]$
    \State Let $S^{(i)}$ be the substream of $S$ with items $\{d: X_{i,d}=1\}$
    \State $U^{(i)}\gets \textsc{Count Sketch} (S^{(i)},96s,1/48)$
    \EndFor
    \State $L\gets \widehat{L^0}(S^{(i)},1/8,1/18)$
    \State $i^*\gets \max\left\{0,\left\lfloor\lg\frac{L}{18s}\right\rfloor\right\}$
    \State \Return{$U^{(i^*)},q=2^{-i^*}$}
    \EndProcedure
  \end{algorithmic}
  \caption{Pairwise independent sampling with probability $q\geq s/|\supp(f)|$.}
  \label{algo: SimpleSketch}
\end{algorithm}

\begin{theorem}\label{thm: SimpleSketch correctness}
With probability at least $7/9$, Algorithm~\ref{algo: SimpleSketch} samples each item in $\supp(f)$ with probability $q\geq s/|\supp(f)|$ and the resulting sample of size $O(s)$.
The algorithm can be implemented with $O(s\log(M)\log^2(n))$ bits of space.
\end{theorem}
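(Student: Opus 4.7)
The plan is to verify three things, each holding with constant probability, and combine them by a union bound: (i) the returned sampling rate $q = 2^{-i^*}$ is at least $s/|\supp(f)|$; (ii) the actual number of items sampled at rate $2^{-i^*}$ is at most $96s$; and (iii) the Count Sketch at index $i^*$ correctly reports every nonzero frequency in $S^{(i^*)}$. The sampling rate itself is automatic from the construction (each $d$ is retained by the pairwise independent Bernoulli $X_{i^*,d}$), so the main task is to show that $i^*$ is chosen well and that the downstream recovery succeeds.

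I would first condition on the $L^0$ estimator returning a value $L$ with $(7/8)|\supp(f)| \le L \le (9/8)|\supp(f)|$; this happens with probability at least $17/18$ by the choice of failure parameter. On this event, $i^* = \max\{0,\lfloor \lg(L/18s)\rfloor\}$ satisfies $2^{-i^*} \ge 18s/L \ge 16s/|\supp(f)|$, so (i) holds, and simultaneously $2^{-i^*} \le 36s/L$ (or equals $1$), so the expected sample size $N := |\{d \in \supp(f): X_{i^*,d} = 1\}|$ is at most $42 s$ (or $N=|\supp(f)|\le 18s$ in the boundary case $i^*=0$). Pairwise independence of the $X_{i^*,d}$ gives $\mathrm{Var}(N) \le \mathbb{E}[N] \le 42s$, so Chebyshev yields $P(N > 96s) \le 42s / (54s)^2 \le 1/70$, establishing (ii). Finally, conditioned on $N \le 96s$, Theorem~\ref{thm: count sketch} (invoked with capacity $96s$ and failure parameter $1/48$) guarantees that (iii) holds with probability at least $47/48$. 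Summing the three failure probabilities gives $1/18 + 1/70 + 1/48 < 2/9$, which is the claim.

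The space accounting is then routine: each of the $\ell+1 = O(\log(n/s))$ Count Sketches uses $O(s\log n\log M)$ bits, the pairwise-independent hash seeds for the $X_{i,d}$ use $O(\log^2 n)$ bits in total, and the $L^0$ estimator of Kane--Nelson--Woodruff contributes only $O(\log n\log M)$ bits, so the dominant term is $O(s\log^2 n\log M)$. The one subtlety I would take care to spell out is the independence structure: I need the Count Sketch failure probability at the \emph{random} index $i^*$ to be bounded by $1/48$, not by $(\ell+1)/48$. This works because the $\ell+1$ Count Sketches are seeded with mutually independent randomness, $i^*$ is a function of $L$ only, and $L$'s randomness is independent of every Count Sketch; thus for any realization of $i^*$ the conditional failure probability of that specific sketch remains $\le 1/48$. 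This conditioning is the main non-mechanical step; once it is justified, the rest is a Chebyshev bound and a union bound.
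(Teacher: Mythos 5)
Your proof is correct and reaches the same bound, but it takes a genuinely different route from the paper's. The paper defines a \emph{deterministic} index $k = \lfloor\lg(|\supp(f)|/16s)\rfloor$, shows that when the $L^0$ estimate is good the random index $i^*$ falls in $\{k-1,k\}$, and then union-bounds the "overfull substream" and Count Sketch failure events over both members of that pair, giving the factor-of-two terms $2/16+2/48$ in the final $1/18+2/16+2/48=2/9$. You instead condition on $L$ to freeze $i^*$, bound the sample size at that single index by Chebyshev, and then use an independence argument (Count Sketch hash randomness is independent of $L$ and of the $X_{i^*,\cdot}$, so conditioning on a good $i^*$ and a valid input stream does not inflate its failure probability) to charge only one Count Sketch failure. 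This saves the factor of two and, in your form of the argument, lands well below $2/9$; the price is that you must spell out the conditioning step that the paper's choice of a deterministic pair lets it sidestep — you correctly identify this as the non-mechanical step and the justification you give is sound. Two small arithmetic slips that do not affect the conclusion: in the boundary case $i^*=0$ one has $L < 36s$, so $|\supp(f)|\le 8L/7 < 42s$ (not $\le 18s$); and $42s/(54s)^2 = 1/(69.4\,s)$, so the clean claim is $\le 1/69$ rather than $\le 1/70$. The space accounting matches the paper's.
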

\begin{proof}
Let 
\[k= \left\lfloor\lg \frac{|\supp(f)|}{16s}\right\rfloor.\]
If $i^*\in\{k-1,k\}$, the streams $S^{(k-1)}$ and $S^{(k)}$ both have small enough support, and the two outputs $U^{(k-1)}$ and $U^{(k)}$ of \textsc{Count Sketch} are correct, then the output is correct.
We show that the intersection of these events occurs with probability at least $7/9$.

First, with probability at least 17/18 $L$ is $(1\pm 1/8)$-approximation to $|\supp(S)|$.
A direct calculations then shows that $i^*\in\{k-1,k\}$.

The following two inequalities arise from the definition of $k$ 
\begin{equation}\label{eq: k inequality}\frac{64s}{|\supp(f)|}\geq 2^{-(k-1)}\geq 2^{-k}\geq \frac{16s}{|\supp(f)|}.\end{equation}
The first inequality implies that the expected support sizes of $S^{(k-1)}$ and $S^{(k)}$ and their variances are all at most $64s$.  
Chebyshev's inequality implies that each of these values exceeds $96s$ with probability no larger than $64/32^2 = 1/16$.
So long as they don't, both streams are valid inputs to $\textsc{Count Sketch}$.
The last inequality of \eqref{eq: k inequality}, with Theorem~\ref{thm: count sketch}, implies that the sampling probability is correct.

Putting it together, the total probability of failure is no larger than 
\begin{equation}\label{eq: decreasing turnstile final union bound}
\frac{1}{18} + \frac{2}{16} + \frac{2}{48}=\frac{2}{9},
\end{equation}
where the terms come from the $|\supp(f)|$ estimation, the support sizes of substreams $k-1$ and $k$, and \textsc{Count Sketch}.

The space bound for turnstile streams follows from Theorem~\ref{thm: count sketch}.
Approximating the support size of the stream with $\widehat{L^0}$ can accomplished with $O(\log n\log\log nM)$ bits using the algorithm of Kane, Nelson, and Woodruff~\cite{kane2010optimal}.
\end{proof}

Because of deletions in the turnstile model, we need to wait until the end of the stream to rule out any of the guesses of $|\supp(f)|$.
This is not the case in the insertion only model.
As soon as the number of nonzero counters grows too large we can infer that the sampling probability is too large and discard the sample.
It turns out that doing so is enough to cut a $\log{n}$ factor from the space complexity of Algorithm~\ref{algo: SimpleSketch}.
A further $\log{n}$ factor can be saved because \textsc{Count Sketch} is not needed in the insertion-only model.

\begin{corollary}\label{cor: insertion-only SimpleSketch}
Algorithm~\ref{algo: SimpleSketch} can be implemented with  $O(s\log M + \log^2n)$ bits of storage for insertion-only streams. 
\end{corollary}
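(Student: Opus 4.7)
The plan is to modify Algorithm~\ref{algo: SimpleSketch} in two independent ways to exploit the insertion-only model, each saving one factor of $\log n$ from the storage bound of Theorem~\ref{thm: SimpleSketch correctness}. First, since updates are only positive, I will represent the sample as an explicit hash table of (item, exact counter) pairs rather than invoking \textsc{Count Sketch}: when an arrival $(d,1)$ concerns an item $d$ already in the sample, just increment its counter, and otherwise test the sampling indicator and, if set, insert $d$ with counter $1$. This replaces the $O(s \log(n/\delta) \log M)$ cost of Count Sketch with $O(s(\log n + \log M)) = O(s\log M)$, using $n \le M$.

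Second, I will replace the $\ell+1 = O(\log n)$ parallel substreams with a single adaptively-maintained level. Instantiate one pairwise independent hash $h:[n]\to[2^\ell]$ and declare $d\in S^{(i)}$ iff $h(d) < 2^{\ell-i}$, so that the samples are nested ($S^{(i+1)}\subseteq S^{(i)}$) and at each fixed level the indicators remain pairwise-independent Bernoullis with parameter $2^{-i}$. Maintain one ``current level'' $i$ (initially $0$) and the running sample $U$. On each arrival $(d,1)$: if $d\in U$, increment its counter; else if $h(d) < 2^{\ell-i}$, insert $d$ with counter $1$. Whenever $|U| > 96s$, promote by setting $i\leftarrow i+1$ and deleting every $d\in U$ with $h(d)\ge 2^{\ell-i}$. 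Run $\widehat{L^0}$ in parallel, and after the stream compute $i^*$ exactly as in line~9 of Algorithm~\ref{algo: SimpleSketch}; if the adaptive $i$ is still below $i^*$, filter once more to reach level $i^*$.

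For correctness I will argue that with probability at least $7/9$ the final level satisfies $i \le i^*$ and $U$ (after filtering to $i^*$) holds the exact frequencies of every item of $S^{(i^*)}$. The $\widehat{L^0}$ estimate is $(1\pm 1/8)$-accurate with probability at least $17/18$, which as in Theorem~\ref{thm: SimpleSketch correctness} forces $i^*\in\{k-1,k\}$, where $k = \lfloor \lg(|\supp(f)|/16s)\rfloor$. Because $|S^{(j)}|$ is nondecreasing in an insertion-only stream, the adaptive rule promotes past level $j$ only if its \emph{final} $|S^{(j)}|$ exceeds $96s$; the same Chebyshev bound $64s/(32s)^2 \le 1/16$ used in Theorem~\ref{thm: SimpleSketch correctness} rules this out for $j\in\{k-1,k\}$. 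A union bound gives total failure probability at most $1/18 + 2/16 < 2/9$, and exactness of the counters is immediate because every sampled update simply increments an entry of $U$.

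For the space bound, $U$ holds at most $96s$ pairs of an element of $[n]$ and a counter bounded by $M$, using $O(s\log M)$ bits; the seed of the pairwise independent hash fits in $O(\log n)$ bits; and the Kane--Nelson--Woodruff $\widehat{L^0}$ sketch uses $O(\log^2 n)$ bits. Summing yields $O(s\log M + \log^2 n)$, as claimed. The one delicate point is arranging nested samples without inflating the sampling randomness beyond what Algorithm~\ref{algo: SimpleSketch} already used; defining the level indicators via thresholds on a single shared hash $h$ accomplishes this for free while preserving pairwise independence within each level, which is all that Proposition~\ref{prop: sampling archetype} demands.
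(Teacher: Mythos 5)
Your proof is correct, and it takes a genuinely different route from the paper's. Both approaches discard \textsc{Count Sketch} in favor of exact counters in a hash table, which is what the insertion-only restriction permits. The divergence is in how the $\log n$ factor from the $O(\log n)$ substream levels is eliminated. The paper keeps \emph{all} $\ell+1$ levels simultaneously, discards any single level's counter table the moment it exceeds $O(t)$ nonzero entries, and then invokes a dedicated lemma (Lemma~\ref{lem: only a few survive}, a Chebyshev bound on a sum of pairwise-independent geometric variables built from $\ell$ independent hash families, costing $O(\log^2 n)$ seed bits) to argue the \emph{total} number of live counters across all $\ell+1$ tables stays $O(t)$ with good probability at each of the $\ell$ critical time points $k^{(i)}$. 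You instead keep a single table and a single ``current level'' $i$, promote $i\mapsto i+1$ the moment the table overflows $96s$, and observe that promotion past level $j$ certifies $|\supp(S^{(j)})|>96s$ at the end of the stream — a one-line consequence of monotonicity of support in the insertion-only model — so two direct Chebyshev bounds at levels $k-1$ and $k$ suffice, with no analogue of Lemma~\ref{lem: only a few survive} needed. You also thin the randomness from $\ell$ independent pairwise-independent families to a single pairwise-independent hash $h:[n]\to[2^\ell]$ with nested threshold levels, reducing the seed from $O(\log^2 n)$ to $O(\log n)$ bits (though this does not change the final bound, which is dominated by the $\widehat{L^0}$ sketch either way). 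Your approach is somewhat simpler and more self-contained; the paper's approach preserves the structural parallel with the turnstile algorithm at the cost of the extra lemma and the larger seed.
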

\begin{proof}
Define $\ell$ independent collections of pairwise independent random variables $Y_{i,d}\sim\Bernoulli(1/2)$, for $d\in[n]$, and choose the random variables in the algorithm to be 
\[X_{i,d}= \prod_{j=1}^i Y_{i,d}.\]
One easily checks that each collection $\{X_{i,d}\}_{d\in[n]}$ is pairwise independent and that $P(X_{i,d}=1)=2^{-i}$, for all $i$ and $d$.
Storing the seeds for the collection $Y_{i,d}$ requires $O(\log^2 n)$ bits.

We can first save a $\log n$ factor by bypassing $\textsc{Count Sketch}$ and instead simply storing counters for each element that appears in each of the $\ell$ substreams.
The counters should be stored in a hash table or other data structure with no space overhead and a small look-up time.
Let us label the maximum number of counters to be stored for each substream as $t$.
We choose $t = \max\{96s, \ell\}$.
If the set of counters for each substream is discarded as soon as the number of nonzero counters exceeds the limit of $O(t)$, then the total storage cannot grow to large.

According to Lemma~\ref{lem: only a few survive}, the algorithm uses more than $12t$ counters with probability at most $1/6\ell$, at any given instant.

For each $0\leq i\leq\ell$ let $T^{(i)}$ be the longest prefix of stream $S^{(i)}$ such that $|\supp(T^{(i)})|\leq s$ and let $k^{(i)}$ denote the number of updates in $T^{(i)}$. 
Now, notice that the number of counters stored locally maximum at each $k^{(i)}$ and increasing for updates between $k^{(i)}$ and $k^{(i+1)}$.
Thus, it is sufficient to bound the storage used by the algorithm at these points.

By a union bound, the probability that the number of counters used by the algorithm at any point $k^{(1)},k^{(2)},\ldots,k^{(\ell)}$ is more than $12t$ is at most $\ell\cdot 1/6\ell = 1/6$.
Finally, adapting the final union bound of \eqref{eq: decreasing turnstile final union bound} in the previous proof we have that the probability of error is at most $(1/18)+(1/6)=2/9$.
\end{proof}

\begin{lemma}\label{lem: only a few survive}
Let $v\in\{0,1\}^{n}$, define $\ell$ independent collections of pairwise independent random variables $Y_{i,d}\sim\Bernoulli(1/2)$, for $s\in[n]$ and $i\in[\ell]$, and set 
\[X_{i,d}= \prod_{j=1}^i Y_{i,d}.\]
For a given $s\in\N$, set $k=0$ if $\sum_dv_d\leq s$ or $k=\max\{i: v^TX_i> s\}$ otherwise, where $X_i=(X_{i,1},X_{i,2},\ldots,X_{i,n})\in\{0,1\}^n$. 
Then
\[P(\sum_{i=k+1}^\ell v^TX_i > 4s)\leq \frac{1}{2s}.\]
\end{lemma}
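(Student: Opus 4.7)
The plan is to condition on $X_{k+1}$ and apply Chebyshev's inequality to the tail sum $T' := \sum_{i=k+1}^\ell v^T X_i$. Writing $S_i := v^T X_i$, the nested definition $X_{i+1,d}=X_{i,d}Y_{i+1,d}$ implies $S_1\geq S_2\geq\cdots$, and by the very definition of $k$ we have $S_{k+1}\leq s$ whenever $k<\ell$. Because the collections $\{Y_{j,d}\}_d$ are independent across $j$, after conditioning on $X_{k+1}$ the variables $\{Y_{j,d}\}_{j>k+1}$ retain their original joint distribution, so each $S_i$ with $i>k+1$ is a sum of pairwise independent Bernoullis supported on the (now deterministic) set $\{d : v_dX_{k+1,d}=1\}$, whose cardinality is exactly $S_{k+1}$.

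I would first compute the conditional mean: pairwise independence inside a single level gives $E[S_i\mid X_{k+1}]=S_{k+1}\cdot 2^{-(i-k-1)}$ for $i\geq k+1$, so the geometric series yields $E[T'\mid X_{k+1}]\leq 2S_{k+1}\leq 2s$. Next comes the conditional variance. For fixed $i$, pairwise independence gives $\mathrm{Var}[S_i\mid X_{k+1}]\leq S_{k+1}\cdot 2^{-(i-k-1)}$, so $\sum_i \mathrm{Var}[S_i\mid X_{k+1}]\leq S_{k+1}$. The covariances require a direct computation: expanding $E[S_iS_j\mid X_{k+1}]$ as a double sum over $(d,d')$ and applying pairwise independence within each level together with full independence across levels gives
\[\mathrm{Cov}(S_i,S_j\mid X_{k+1}) = S_{k+1}\cdot 2^{-(j-k-1)}\bigl(1-2^{-(i-k-1)}\bigr)\]
for $k+1\leq i<j\leq\ell$, whose sum over $i<j$ is again at most $S_{k+1}$. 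Hence $\mathrm{Var}[T'\mid X_{k+1}]\leq 2S_{k+1}\leq 2s$. Plugging into Chebyshev,
\[P(T'>4s\mid X_{k+1})\leq \frac{\mathrm{Var}[T'\mid X_{k+1}]}{(4s-E[T'\mid X_{k+1}])^2}\leq \frac{2s}{(2s)^2}=\frac{1}{2s},\]
and taking expectation over $X_{k+1}$ completes the proof.

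The hard part will be nailing the constants in the conditional variance. The nested sampling makes $S_i$ and $S_j$ positively correlated even after conditioning on $X_{k+1}$, and a lazy bound such as $\mathrm{Cov}(S_i,S_j\mid X_{k+1})\leq S_{k+1}\cdot 2^{-(j-k-1)}$ inflates $\sum_{i<j}\mathrm{Cov}$ by a $\Theta(\ell)$ or just a too-large constant factor that kills the $1/(2s)$ dependence. One really has to keep the $1-2^{-(i-k-1)}$ factor to get $\sum_{i<j}\mathrm{Cov}\leq S_{k+1}$ and hence $\mathrm{Var}[T'\mid X_{k+1}]\leq 2S_{k+1}$. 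The boundary cases $k=0$ (when $\sum_d v_d\leq s$) and $k=\ell$ (when $T'$ is a single term or empty) are handled trivially by the same argument.
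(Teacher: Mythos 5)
Your argument is sound and lands on exactly the same Chebyshev bound as the paper, but the variance bookkeeping is organized differently. The paper groups the tail sum by \emph{item} rather than by \emph{level}: writing $Z_d$ for the number of levels beyond $k$ at which item $d$ survives, each $Z_d$ is $\Geometric(1/2)$ with known mean and variance $2$, the collection $\{Z_d\}$ is pairwise independent, and $T'=\sum_d Z_d$, so $\mathrm{Var}[T']=2S_{k+1}$ is immediate without any cross-level covariances. Your per-level expansion $T'=\sum_i S_i$ forces you to track $\mathrm{Cov}(S_i,S_j)$, and you do compute it correctly, but note that your stated intermediate bounds do not actually imply your conclusion: from $\sum_i\mathrm{Var}[S_i\mid X_{k+1}]\leq S_{k+1}$ and $\sum_{i<j}\mathrm{Cov}(S_i,S_j\mid X_{k+1})\leq S_{k+1}$ the variance identity $\mathrm{Var}[T']=\sum_i\mathrm{Var}[S_i]+2\sum_{i<j}\mathrm{Cov}(S_i,S_j)$ only gives $\mathrm{Var}[T'\mid X_{k+1}]\leq 3S_{k+1}$, which yields $3/(4s)$ rather than $1/(2s)$. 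The fix is not to round up: the exact sums are each $\tfrac{2}{3}S_{k+1}$, so $\mathrm{Var}[T'\mid X_{k+1}]=\tfrac{2}{3}S_{k+1}+2\cdot\tfrac{2}{3}S_{k+1}=2S_{k+1}$ exactly, after which your Chebyshev step goes through. One further small difference is that you condition on $X_{k+1}$ explicitly, whereas the paper replaces $v$ by $v\odot X_k$ to reduce to $k=0$; your conditioning is arguably cleaner, since it makes the ``WLOG'' step honest.
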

\begin{proof}
The sum is clearly monotonically increasing, so without loss of generality assume $\ell=\infty$.
Notice that if $k>0$, the sum is unchanged (i.e., it remains the same random variable) upon replacing $v$ with the coordinate-wise product of $v$ and $X_{k}$. 
Thus we may also assume that $k=0$, i.e., $|\supp(v)|\leq s$.

For each $d\in\supp(v)$, let $Z_d = \sup\{i:X_{i,d}=1\}$.  
Notice that $\{Z_d\}_{d\in\supp(v)}$ is a pairwise independent collection of $\Geometric(1/2)$ random variables and let $Z = \sum_{d\in\supp(v)}Z_d$.  We have that
\[Z = \sum_{i=0}^\infty v^TX_i,\]
because $X_{i,d}=0$ implies $X_{j,d}=0$ for all $j>i$.

Pairwise independence implies $EZ =Var(Z)=2|\supp(v)|\leq 2s$, and by Chebyshev's inequality
\[P(|Z-2s|> 2s)\leq \frac{Var(Z)}{4s^2}\leq \frac{1}{2s}.\]
\end{proof}

\end{document}